\pgfplotsset{compat=1.14}
\pgfplotsset{every tick label/.append style={font=\footnotesize}}
\newcolumntype{R}{>{\raggedleft\arraybackslash}X}
\newcolumntype{L}{>{\raggedright\arraybackslash}X}
\newcolumntype{C}{>{\centering\arraybackslash}X}
\newcolumntype{A}{>{\columncolor{gray!25}}C}
\newcolumntype{a}{>{\columncolor{gray!25}}c}
\newlength{\tablen}
\newcolumntype{.}{D{.}{.}{-1}}
\renewcommand\p@subfigure{\arabic{figure}.}
\renewcommand\p@subtable{\arabic{table}.}
\setlist[itemize]{leftmargin=2.5\parindent}
\setlist[enumerate]{leftmargin=2.5\parindent}
\newcommand\Pair[3]{%
  \begin{tabular}{|>{\centering\arraybackslash}m{0.75cm}|>{\centering\arraybackslash}m{1.5cm}|}
  \hline
  \multirow{2}{*}{#1} & #2 \\ \cline{2-2}
   & #3 \\
  \hline
  \end{tabular}%
}
\theoremstyle{plain}
\newtheorem{proposition}{Proposition}[section]
\theoremstyle{definition}
\newtheorem{definition}{Definition}[section]
\newtheorem{example}{Example}[section]
\theoremstyle{remark}
\newtheorem{remark}{Remark}[section]
\def\keywords{\vspace{.5em} 
{\noindent \textit{Keywords}: }}
\def\AMS{\vspace{.5em} 
{\noindent \textbf{\emph{MSC} class}: }}
\def\JEL{\vspace{.5em} 
{\noindent \textbf{\emph{JEL} classification number}: }}
\author{\href{https://sites.google.com/view/laszlocsato}{L\'aszl\'o Csat\'o}\thanks{~E-mail: \emph{laszlo.csato@sztaki.hu}} }
\affil{Institute for Computer Science and Control (SZTAKI) \\
E\"otv\"os Lor\'and Research Network (ELKH) \\
Laboratory on Engineering and Management Intelligence \\
Research Group of Operations Research and Decision Systems}
\affil{Corvinus University of Budapest (BCE) \\
Department of Operations Research and Actuarial Sciences}
\affil{Budapest, Hungary}
\title{How to design a multi-stage tournament \\ when some results are carried over?}
\date{\today}
\def\Dedication{ 
{\noindent $\mathfrak{Es}$ $\mathfrak{ist}$ $\mathfrak{dabei}$ $\mathfrak{selbst}$ $\mathfrak{die}$ $\mathfrak{historische}$ $\mathfrak{Wahrheit}$ $\mathfrak{eine}$ $\mathfrak{Nebensache}$, $\mathfrak{ein}$ $\mathfrak{erfundenes}$ $\mathfrak{Beispiel}$ $\mathfrak{k\ddot{o}nnte}$ $\mathfrak{auch}$ $\mathfrak{dienen}$; $\mathfrak{nur}$ $\mathfrak{haben}$ $\mathfrak{historische}$ $\mathfrak{immer}$ $\mathfrak{den}$ $\mathfrak{Vorzug}$, $\mathfrak{praktischer}$ $\mathfrak{zu}$ $\mathfrak{sein}$ $\mathfrak{und}$ $\mathfrak{den}$ $\mathfrak{Gedanken}$, $\mathfrak{welchen}$ $\mathfrak{sie}$ $\mathfrak{erl\ddot{a}utern}$, $\mathfrak{dem}$ $\mathfrak{praktischen}$ $\mathfrak{Leben}$ $\mathfrak{selbst}$ $\mathfrak{n\ddot{a}her}$ $\mathfrak{zu}$ $\mathfrak{f\ddot{u}hren.}$}\footnote{~
``\emph{Historical correctness is a secondary consideration; a case invented might also serve the purpose as well, only historical ones are always to be preferred, because they bring the idea which they illustrate nearer to practical life.}'' (Source: Carl von Clausewitz: \emph{On War}, Book 2, Chapter 6 [On Examples], translated by Colonel James John Graham, London, N. Tr\"ubner, 1873. \url{http://clausewitz.com/readings/OnWar1873/TOC.htm})}
\vspace{0.25cm}

\flushright
\noindent (Carl von Clausewitz: \emph{Vom Kriege})

\vspace{1cm} 
\justify }
\begin{document}

\newgeometry{top=15mm,bottom=25mm,left=25mm,right=25mm}
\maketitle
\thispagestyle{empty}
\Dedication

\begin{abstract}
\noindent
The paper discusses the strategy-proofness of sports tournaments with multiple group stages, where the results of matches already played in the previous round against teams in the same group are carried over. These tournaments, widely used in handball and other sports, are shown to be incentive incompatible in the sense that a team can be strictly better off by not exerting full effort in a game. Historical examples are presented when a team was ex ante disinterested in winning by a high margin. We propose a family of incentive compatible designs. Their main characteristics are compared to the original format via simulations. Carrying over half of the points scored in the previous round turns out to be a promising policy.

\keywords{handball; incentive compatibility; OR in sports; simulation; tournament design}

\AMS{62F07, 68U20, 91A80, 91B14}

\JEL{C44, C63, D71, Z20}
\end{abstract}

\clearpage
\restoregeometry

\section{Introduction} \label{Sec1}

Strategy-proofness is a central issue in sports where all contestants are familiar with the high-stake decisions involved and behave as strategic actors. Consequently, a tournament design should provide the players with the appropriate incentives to perform \citep{Szymanski2003}.

Although sporting applications of operations research proliferate in the academic world \citep{Wright2014}, the scientific analysis of sports ranking rules from the perspective of incentive compatibility has started only recently. \citet{KendallLenten2017} give probably the first comprehensive review of sports regulations with unexpected consequences. Their examples uncover three possible situations in which a team might prefer losing a game to winning it:
(1) when it might gain advantages in the next season;
(2) when a lower-ranked team can still qualify and might face a favoured competitor in a later stage of the tournament;
(3) when a team is strictly better off by losing due to an ill-constructed tournament design.

The classical example for the first situation arises from the reverse order applied in the traditional set-up of player drafts that aims to increase competitive balance over time. Hence, if a team is still certainly eliminated from the play-off, a perverse incentive is created to tank in the later games \citep{TaylorTrogdon2002, PriceSoebbingBerriHumphreys2010, Fornwagner2019}.

The second situation occurred, for instance, in \href{https://en.wikipedia.org/wiki/Badminton_at_the_2012_Summer_Olympics_\%E2\%80\%93_Women\%27s_doubles}{Badminton at the 2012 Summer Olympics---Women's doubles} \citep[Section~3.3.1]{KendallLenten2017}, and has inspired some works addressing the strategic manipulation problem with game-theoretical tools \citep{Pauly2014, Vong2017}.

However, in the first case, the rules are deliberately devised to support underdogs, and in the second case, the team gains only in expected terms.
The current paper discusses the most serious third situation when the tournament rules allow a team to certainly benefit from a weaker performance. Therefore, a design is called \emph{strategy-proof} in the following if this possibility is excluded. We do not deal with other forms of strategic manipulation like collusion and shirking.

Probably the first academic paper studying the problem of such misaligned incentives is \citet{DagaevSonin2018}. The authors prove that tournament systems, consisting of multiple round-robin and knockout tournaments with noncumulative prizes, are generically incentive incompatible. Recent qualifications for the UEFA (Union of European Football Associations) European Championships have also been shown to be vulnerable to manipulation \citep{Csato2018b, Csato2020c}, including the case when both teams should avoid winning to advance to the next stage \citep{Csato2020d}.

Here it will be revealed that sports tournaments with multiple group stages, in which some (but not all) match results from the previous round are carried over to the next round, suffer from incentive incompatibility. In particular, teams initially play a preliminary round-robin stage and the top teams qualify for a second (main) round-robin stage. In this second stage, some groups are ``merged'', that is, two teams qualifying from the same group in the preliminary stage will be in the same main round-robin group. In order to reduce the number of matches in the tournament, these teams do not play another game in the main stage against each other but its result is carried over from the preliminary stage.

\begin{table}[t]
\centerline{
\begin{threeparttable}
\caption{Recent handball tournaments with multiple group stages and results that are carried over}
\label{Table1}
\rowcolors{1}{gray!20}{}
\begin{tabularx}{1.1\linewidth}{Llc ccc ccc} \toprule \hiderowcolors
     \multirow{2}{*}{Tournament} & \multirow{2}{*}{Year(s)} & \multirow{2}{*}{Type} & \multicolumn{3}{c}{Preliminary round}       & \multicolumn{3}{c}{Main round} \\ 
     & & & Gr.($k$)   & Teams & Q     & Gr.($\ell$)   & Teams & Q \\ \bottomrule \showrowcolors
    \href{https://en.wikipedia.org/wiki/European_Men\%27s_Handball_Championship}{EHF Euro Men} & \href{https://en.wikipedia.org/wiki/2002_European_Men\%27s_Handball_Championship}{2002}--\href{https://en.wikipedia.org/wiki/2018_European_Men's_Handball_Championship}{2018} & S     & 4     & 4     & 3     & 2     & 6     & 2 \\
    \href{https://en.wikipedia.org/wiki/European_Men\%27s_Handball_Championship}{EHF Euro Men} & \href{https://en.wikipedia.org/wiki/2020_European_Men's_Handball_Championship}{2020}-- & S     & 6     & 4     & 2     & 2     & 6     & 2 \\
    \href{https://en.wikipedia.org/wiki/European_Women\%27s_Handball_Championship}{EHF Euro Women} & \href{https://en.wikipedia.org/wiki/2002_European_Women's_Handball_Championship}{2002}-- & S     & 4     & 4     & 3     & 2     & 6     & 2 \\
    \href{https://en.wikipedia.org/wiki/Women\%27s_EHF_Champions_League}{EHF Women's CL} & \href{https://en.wikipedia.org/wiki/2013\%E2\%80\%9314_UEFA_Women\%27s_Champions_League}{2013/14}--\href{https://en.wikipedia.org/wiki/2019\%E2\%80\%9320_UEFA_Women\%27s_Champions_League}{2019/20} & D     & 4     & 4     & 3     & 2     & 6     & 4 \\
    \href{https://en.wikipedia.org/wiki/IHF_World_Men\%27s_Handball_Championship}{IHF World Men} & \href{https://en.wikipedia.org/wiki/2003_World_Men\%27s_Handball_Championship}{2003}  & S     & 4     & 6     & 4     & 4     & 4     & 1 \\
    \href{https://en.wikipedia.org/wiki/IHF_World_Men\%27s_Handball_Championship}{IHF World Men} & \href{https://en.wikipedia.org/wiki/2005_World_Men\%27s_Handball_Championship}{2005}, \href{https://en.wikipedia.org/wiki/2009_World_Men\%27s_Handball_Championship}{2009}-\href{https://en.wikipedia.org/wiki/2011_World_Men\%27s_Handball_Championship}{2011}, \href{https://en.wikipedia.org/wiki/2019_World_Men\%27s_Handball_Championship}{2019}  & S     & 4     & 6     & 3     & 2     & 6     & 2 \\
    \href{https://en.wikipedia.org/wiki/IHF_World_Men\%27s_Handball_Championship}{IHF World Men} & \href{https://en.wikipedia.org/wiki/2007_World_Men\%27s_Handball_Championship}{2007}  & S     & 6     & 4     & 2     & 2     & 6     & 4 \\
    \href{https://hu.wikipedia.org/wiki/F\%C3\%A9rfi_k\%C3\%A9zilabda-vil\%C3\%A1gbajnoks\%C3\%A1g}{IHF World Men} & \href{https://hu.wikipedia.org/wiki/2021-es_f\%C3\%A9rfi_k\%C3\%A9zilabda-vil\%C3\%A1gbajnoks\%C3\%A1g}{2021}  & S     & 8     & 4     & 3     & 4     & 6     & 2 \\
    \href{https://en.wikipedia.org/wiki/IHF_World_Women\%27s_Handball_Championship}{IHF World Women} & \href{https://en.wikipedia.org/wiki/2003_World_Women\%27s_Handball_Championship}{2003}-\href{https://en.wikipedia.org/wiki/2005_World_Women\%27s_Handball_Championship}{2005}, \href{https://en.wikipedia.org/wiki/2009_World_Women\%27s_Handball_Championship}{2009}, \href{https://en.wikipedia.org/wiki/2019_World_Women\%27s_Handball_Championship}{2019} & S     & 4     & 6     & 3     & 2     & 6     & 2 \\
    \href{https://en.wikipedia.org/wiki/IHF_World_Women\%27s_Handball_Championship}{IHF World Women} & \href{https://en.wikipedia.org/wiki/2007_World_Women\%27s_Handball_Championship}{2007}  & S     & 6     & 4     & 2     & 2     & 6     & 4 \\ \bottomrule
\end{tabularx}
\begin{tablenotes} \footnotesize
\item
Notes: S = single round-robin (in groups); D = double round-robin (in groups); Gr. = Number of groups in the preliminary and main round, respectively, which will be denoted by $k$ and $\ell$ in the theoretical model of Section~\ref{Sec3}; Teams = Number of teams in each group of the preliminary and main round, respectively; Q = Number of teams qualified from each group of the preliminary and main round, respectively.
\item
Abbreviations: EHF Euro Men (Women) = European Men's (Women's) Handball Championship; EHF Women's CL = Women's EHF Champions League; IHF World Men (Women) = IHF World Men's (Women's) Handball Championship.
\end{tablenotes}
\end{threeparttable}
}
\end{table}

This format is widely used in handball as Table~\ref{Table1} demonstrates. All these tournaments contain two group stages, and the number of qualified teams in the main round shows the number of teams that have a chance to win the tournament at the end of this phase (see the last column).
Tournaments with multiple group stages and carried over results are also used in other sports, for instance, in basketball (\href{https://en.wikipedia.org/wiki/EuroBasket_2013}{EuroBasket 2013}), cricket (\href{https://en.wikipedia.org/wiki/2007_Cricket_World_Cup}{2007 Cricket World Cup}) \citep{ScarfYusofBilbao2009}, and volleyball (\href{https://en.wikipedia.org/wiki/2014_FIVB_Volleyball_Men\%27s_World_Championship}{2014 FIVB Volleyball Men's World Championship}).
Although \href{https://en.wikipedia.org/wiki/1999\%E2\%80\%932000_UEFA_Champions_League}{1999/2000 UEFA Champions League}, as well as the following three seasons of this tournament, included two subsequent group stages, no results were carried over to the second group stage.

First, we present a real-world handball match where a team had an incentive not to win by a high margin.
Second, this particular tournament design is verified to violate strategy-proofness in general.
Finally, an incentive compatible mechanism is provided, namely, to carry over a monotonic transformation of all results from the previous round, regardless that some matches were played against teams already eliminated from the tournament. According to computer simulations, carrying over half of all points scored in the previous round essentially does not affect the selective ability and the competitive balance of the tournament, while it guarantees strategy-proofness and even reduces the influence of seeding the teams into pots before the draw of the groups.
Our suggestion has been discussed in a recent collection of academic work proposing rule change ideas \citep{LentenKendall2021}.

The main contributions can be summarised as follows:
(1) the incentive incompatibility of sports tournaments with multiple group stages is proved by a mathematical model;
(2) a real-world example is presented to show that this is not only an irrelevant issue in practice, and a third, innocent team might suffer from the unsportsmanlike act of a team in a match;
(3) inspired by a policy applied in some European association football leagues, a viable strategy-proof alternative is suggested.
These results can be especially useful for sports administrators.

The mathematical framework for multi-stage tournaments somewhat overlaps with the models of \citet{Csato2020c} and \citet{Csato2020d}. But both works investigate the problem of qualification systems where some teams playing in different groups should be compared, and they do not deal with the next stage of the tournament and do not consider the possibility of carrying over some points at all.
The type of sports tournaments analysed here is completely new in the literature discussing incentive (in)compatibility. Unfortunately, contrary to \citet{DagaevSonin2018} and \citet{Csato2020c}, no straightforward strategy-proof mechanism can be found for the original format of tournaments with multiple group stages and carried over results.
Thus alternative incentive compatible solutions should be proposed and investigated by Monte-Carlo simulations. The simulation methodology is imported from \citet{Csato2021b}, a paper that aims only to compare certain tournament designs without addressing incentive compatibility.

The rest of the paper proceeds as follows.
Section~\ref{Sec2} brings an example from handball, where the unfair behaviour of a team could lead to the elimination of a third team.
Section~\ref{Sec3} builds a theoretical model to prove that a standard tournament with multiple group stages violates strategy-proofness.
Section~\ref{Sec4} proposes a family of incentive compatible mechanisms for organising these tournaments and explores their characteristics with respect to selective ability and competitive balance via Monte-Carlo simulations.
Finally, Section~\ref{Sec5} concludes.

\section{A real-world example of misaligned incentives} \label{Sec2}

The \href{https://en.wikipedia.org/wiki/European_Men\%27s_Handball_Championship}{European Men's Handball Championship} is a biennial competition for the senior men's national handball teams of Europe since 1994, organised by the EHF (European Handball Federation), the umbrella organization for European handball. The \href{https://en.wikipedia.org/wiki/2014_European_Men's_Handball_Championship}{11th European Men's Handball Championship (EHF Euro 2014)} was held in Denmark between 12 and 26 January 2014.
In its preliminary round, the sixteen national teams were divided into four groups (A--D) to play in a round-robin format. The top three teams in each group qualified to the main round: teams from Groups A and B composed the first main round group X, while teams from Groups C and D composed the second main round group Y. The main round groups were also organised in a round-robin format, but all matches (consequently, results and points) played in the preliminary round between teams that were in the same main round group, were kept and remained valid for the ranking of the main round.
Figure~\ref{Fig_A1} in the Appendix gives an overview of this tournament design.

In the groups of the preliminary and main rounds, two points were awarded for a win, one point for a draw, and zero points for a defeat. Teams were ranked by adding up their number of points.
If two or more teams had an equal number of points, the following tie-breaking criteria were used after the completion of all group matches \citep[Articles~9.11--9.12 and 9.23--9.24]{EHF2021}:
\noindent \emph{
\begin{enumerate}[label=\alph*)]
\item
Higher number of points obtained in the group matches played amongst the teams in question;
\item
Superior goal difference from the group matches played amongst the teams in question;
\item
Higher number of goals scored in the group matches played amongst the teams in question;
\item
Superior goal difference from all group matches (achieved by subtraction);
\item
Higher number of goals scored in all group matches.
\end{enumerate}
}

\begin{table}[t]
\centering
\caption{11th European Men's Handball Championship (EHF Euro 2014), Group C}
\label{Table2}
\begin{subtable}{\linewidth}
\centering
\caption{Match results}
\label{Table2a}
\rowcolors{1}{}{gray!20}
    \begin{tabularx}{0.9\linewidth}{lLLc} \toprule
    Date  & First team & Second team & Result \\ \bottomrule \showrowcolors
    13 January 2014, 18:00 & Serbia & Poland & 20-19 \\
    13 January 2014, 20:15 & France & Russia & 35-28 \\ \hline
    15 January 2014, 18:00 & Russia & Serbia & 27-25 \\
    15 January 2014, 20:15 & Poland & France & 27-28 \\ \hline
    17 January 2014, 18:00 & Poland & Russia & to be played \\
    17 January 2014, 20:15 & Serbia & France & to be played \\ \bottomrule
    \end{tabularx}
\end{subtable}

\vspace{0.5cm}
\begin{threeparttable}
\begin{subtable}{\linewidth}

\centering
\caption{Standing after two matchdays}
\label{Table2b}
\rowcolors{1}{}{gray!20}
    \begin{tabularx}{\linewidth}{Cl CCCC CC >{\bfseries}C} \toprule \showrowcolors
    Pos   & Team   & W     & D     & L     & GF    & GA    & GD    & Pts \\ \bottomrule
    1     & France & 2     & 0     & 0     & 63    & 55    & 8     & 4 \\
    2     & Serbia & 1     & 0     & 1     & 45    & 46    & -1    & 2 \\
    3     & Russia & 1     & 0     & 1     & 55    & 60    & -5    & 2 \\ \toprule
    4     & Poland & 0     & 0     & 2     & 46    & 48    & -2    & 0 \\ \bottomrule    
    \end{tabularx}
\end{subtable}
\begin{tablenotes} \footnotesize
\item
Notes: Pos = Position; W = Won; D = Drawn; L = Lost; GF = Goals for; GA = Goals against; GD = Goal difference; Pts = Points. \\
All teams have played 2 matches. \\
The top three teams qualify for the main round.
\end{tablenotes}
\end{threeparttable}
\end{table}

A strange situation emerged in Group C of the preliminary round, which requires further investigation. On 16 January 2014, each team in the group had one more game to play.
Table~\ref{Table2} shows the known results and the preliminary standing.

Consider the possible scenarios from the perspective of Poland.
It is certainly eliminated if it does not win against Russia.
Poland carries over $0$ points, $46$ goals for and $48$ goals against to the main round if it wins against Russia and Serbia plays at least a draw against France because then Russia will be eliminated as the fourth team of the group.
On the other hand, if Poland wins by $x$ goals against Russia and Serbia loses, there will be three teams with $2$ points, which obtained $2$ points in the group matches played among them. Consequently, further tie-breaking criteria should be applied: Poland, Russia, and Serbia will have head-to-head goal differences of $x-1$, $2-x$, and $-1$, respectively.

$x-1 > -1$ implies that Poland will qualify.
Serbia is eliminated as being the fourth team if $1 \leq x \leq 2$.
Russia and Serbia have the same head-to-head goal difference if $x=3$, hence the number of goals scored against the three teams with $2$ points breaks the tie. It is $45$ for Serbia and at least $27$ for Russia, thus Russia qualifies if it scores at least $19$ goals against Poland (if Poland vs.\ Russia is 21-18, then the ranking will depend on the result of Serbia vs.\ France).
If $x \geq 4$, then Serbia has a better head-to-head goal difference than Russia, thus Serbia qualifies, and Russia is eliminated.

\begin{figure}[t]
\centering

\begin{tikzpicture}[scale=1,auto=center, transform shape, >=triangle 45]
\tikzstyle{every node}=[draw,align=center];
  \node (C1) at (0,6) {Does Poland win by $x$ goals against Russia?};
  \node (C2) at (-4,3) {Does Serbia lose against France?};
  \node (C3) at (-4,0) {Is $x$ less than 4?};
  \node (C4) at (-4,-3) {Is $x$ equal to 3?};

\tikzstyle{every node}=[align=center];
  \node (O1) at (4,3) {\textcolor{red}{Poland is eliminated}};
  \node (O2) at (4,0) {\textcolor{blue}{Poland qualifies with 0 points}};
  \node (O3) at (4,-3) {\textcolor{blue}{Poland qualifies with 0 points}};
  \node (O4) at (-4,-6) {\textcolor{blue}{Poland qualifies with 0 or 2 points}};
  \node (O5) at (4,-6) {\textcolor{ForestGreen}{\textbf{Poland qualifies with 2 points}}};
  
\tikzstyle{every node}=[align=center];  
  \draw [->,line width=1pt] (C1) -- (C2)  node [midway,above left] {Yes};
  \draw [->,line width=1pt] (C1) -- (O1)  node [midway,above right] {No};  
  \draw [->,line width=1pt] (C2) -- (C3)  node [midway,left] {Yes};
  \draw [->,line width=1pt] (C2) -- (O2)  node [midway,above right] {No};
  \draw [->,line width=1pt] (C3) -- (C4)  node [midway,left] {Yes};
  \draw [->,line width=1pt] (C3) -- (O3)  node [midway,above right] {No};
  \draw [->,line width=1pt] (C4) -- (O4)  node [midway,left] {Yes};
  \draw [->,line width=1pt] (C4) -- (O5)  node [midway,above right] {No};
\end{tikzpicture}

\captionsetup{justification=centering}
\caption{Possible scenarios before the last matchday of Group C in the \\ 2014 European Men's Handball Championship from the perspective of Poland}
\label{Fig1}
\end{figure}


Figure~\ref{Fig1} overviews all scenarios from the perspective of Poland.
To summarise, if Poland wins, it carries over its result against Russia ($2$ points) or Serbia ($0$ points) to the main round, therefore Poland has every incentive to qualify together with Russia.
Consequently, it is \emph{ex ante} unfavourable for Poland to win by more than three goals against Russia because this scenario yields no gain in the main round but may lead to a loss of $2$ points if Serbia is defeated by France. However, Russia does not have similar problems with its incentives, for example, it is clearly better off by a smaller defeat compared to a greater one.

In fact, Poland vs.\ Russia was 24-22 and Serbia vs.\ France was 28-31, hence France, Poland, and Russia qualified for the main round with $4$, $2$, and $0$ points, respectively.
The result of Poland vs.\ Russia was 10-14 after $30$ minutes (half-time), while the match stood at 21-16 in the $48$th, 22-17 in the $50$th, and 23-18 in the $52$nd minute \citep{EHF2014b}.\footnote{~A video of the match Poland vs.\ Russia is available at \url{https://www.youtube.com/watch?v=dQvEAzyBgGo}.}

These events, perhaps influenced by the misaligned incentives of Poland, led to the elimination of a third, innocent team, Serbia, which makes the example especially worrying. The situation could not have been improved by playing the last group matches simultaneously because Poland's (weakly) dominant strategy was independent of the result of the game played later.
This seems to be a persuading argument against the rules of the 11th European Men's Handball Championship (EHF Euro 2014).

\section{The model} \label{Sec3}

Now we build a model of a tournament consisting of round-robin preliminary and main rounds, where the matches played in the preliminary round against teams that qualified to the same main round group are carried over. This design will be proved to violate strategy-proofness, that is, it allows for misaligned incentives.
Our notations follow \citet{Csato2020c} in certain details since the qualification system discussed there is also based on round-robin groups.

\begin{definition} \label{Def31}
\emph{Round-robin group}:
The pair $(X,R)$ is a \emph{round-robin group} where
\begin{itemize}
\item
$X$ is a finite set of at least two teams;
\item
the \emph{ranking method} $R$ associates a strict order $R(v)$ on the set $X$ for any function $v: X \times X \to \left\{ \left( v_1; v_2 \right): v_1,v_2 \in \mathbb{N} \right\} \cup \{ \text{---} \} \cup \{ \otimes \}$ such that $v(x,y) = \text{---}$ if $x=y$ and $v(x,y) = \otimes$ implies $v(y,x) = \otimes$.
\end{itemize}
\end{definition}

Function $v$ describes game results with the number of goals scored by the first and second team, respectively. It contains the possibility that some matches between the teams remain to be played, denoted by the symbol $\otimes$.

Definition~\ref{Def31} can describe a home-and-away round-robin tournament where any two teams play each other once at home and once at away.
Consider the notation $v(u,w) = \left( v_1(u,w); v_2(u,w) \right)$ where the first team $u$ plays at home.
It is said that team $x$ wins over team $y$ if $v_1(x,y) > v_2(x,y)$ (home) or $v_1(y,x) < v_2(y,x)$ (away), team $x$ loses to team $y$ if $v_1(x,y) < v_2(x,y)$ (home) or $v_1(y,x) > v_2(y,x)$ (away) and team $x$ draws against team $y$ if $v_1(x,y) = v_2(x,y)$ or $v_1(y,x) = v_2(y,x)$.

Let $(X,R)$ be a round-robin group, $x,y \in X$, $x \neq y$ be two teams, and $v$ be a set of results.
$x$ is ranked higher (lower) than $y$ if and only if $x$ is preferred to $y$ by $R(v)$, that is, $x \succ_{R(v)} y$ ($x \prec_{R(v)} y$).

The ranking is usually based on the number of points scored.

\begin{definition} \label{Def32}
\emph{Number of points}:
Let $(X,R)$ be a round-robin group, $v$ be a set of results, $x \in X$ be a team, and $\alpha > \beta > \gamma$ be three parameters.
Denote by $N_v^w(x)$ the number of wins, by $N_v^d(x)$ the number of draws, and by $N_v^\ell(x)$ the number of losses of team $x$, respectively.
The \emph{number of points} of team $x$ is $s_v(x) = \alpha N_v^w(x) + \beta N_v^d(x) + \gamma N_v^\ell(x)$.
\end{definition}

In other words, a win gives $\alpha$, a draw gives $\beta$, and a loss gives $\gamma$ points.

\begin{remark} \label{Rem31}
With a slight abuse of notation, it is assumed in the following that the ranking method $R$ determines the values $\alpha > \beta > \gamma$ for any round-robin group $(X,R)$.
\end{remark}

The number of points does not necessarily induce a strict order on the set of teams, hence some tie-breaking rules should be introduced.

\begin{definition} \label{Def33}
\emph{Goal difference}:
Let $(X,R)$ be a round-robin group, $v$ be a set of results, and $x \in X$ be a team.
The \emph{goal difference} of team $x$ is
\[
gd_v(x) = \sum_{y \in X \setminus \{ x \}, \, v(x,y) \neq \otimes} \left[ v_1(x,y) - v_2(x,y) \right] + \sum_{y \in X \setminus \{ x \}, \, v(x,y) \neq \otimes} \left[ v_2(y,x) - v_1(y,x) \right].
\]
\end{definition}

Goal difference is the number of goals scored by team $x$ minus the number of goals conceded by team $x$.

\begin{definition} \label{Def34}
\emph{Head-to-head results}:
Let $(X,R)$ be a round-robin group, $v$ be a set of results, and $x \in X$ be a team.
Denote by $L \subseteq X \setminus \{ x \}$ a set of teams. \\
The \emph{head-to-head number of points} of team $x$ with respect to $L$ is
\begin{eqnarray*}
s_v^L(x) & = & \alpha \left( | \left\{ y \in L: v_1(x,y) > v_2(x,y) \right\} | + | \left\{ y \in L: v_1(y,x) < v_2(y,x) \right\} | \right) + \\
& & + \beta \left( | \left\{ y \in L: v_1(x,y) = v_2(x,y) \right\} | + | \left\{ y \in L: v_1(y,x) = v_2(y,x) \right\} | \right) + \\
& & + \gamma \left( | \left\{ y \in L: v_1(x,y) < v_2(x,y) \right\} | + | \left\{ y \in L: v_1(y,x) > v_2(y,x) \right\} | \right).
\end{eqnarray*}
The \emph{head-to-head goal difference} of team $x$ with respect to $L$ in $(X,v)$ is
\[
gd_v^L(x) = \sum_{y \in L} \left[ v_1(x,y) - v_2(x,y) \right] + \sum_{y \in L} \left[ v_2(y,x) - v_1(y,x) \right].
\]
\end{definition}

In accordance with \citet[Articles~9.11--9.12 and 9.23--9.24]{EHF2021}, head-to-head results are calculated only for complete round-robin groups where all matches have already been played.

\begin{definition} \label{Def35}
\emph{Head-to-head domination}:
Let $(X,R)$ be a round-robin group, $v$ be a set of results, and $x, y \in X$ be two teams such that $s_v(x) = s_v(y)$. Denote by $L$ the set of teams that have scored the same number of points as teams $x$ and $y$.
Team $x$ \emph{head-to-head dominates} team $y$ if one of the following holds:
\begin{itemize}
\item
$s_v^L(x) > s_v^L(y)$;
\item
$s_v^L(x) = s_v^L(y)$ and $gd_v^L(x) > gd_v^L(y)$.
\end{itemize}
\end{definition}

Therefore, if two teams have the same number of points, then one head-to-head dominates the other if:
(a) it has scored more points against all teams with the same number of points (the first condition); or
(b) it has scored the same number of points against all teams with the same number of points but has a superior goal difference against them (the second condition). See the analogy to \citet[Articles~9.11--9.12 and 9.23--9.24]{EHF2021}.

\begin{definition} \label{Def36}
\emph{Monotonicity of the ranking in a round-robin group}:
Let $(X,R)$ be a round-robin group.
Its ranking method is called \emph{monotonic} if for any set of results $v$ and for any teams $x,y \in X$:
\begin{enumerate}
\item \label{eq1}
$s_v(x) > s_v(y)$ implies $x \succ_{R(v)} y$;
\item \label{eq2}
$s_v(x) = s_v(y)$, $gd_v(x) > gd_v(y)$, and $x$ head-to-head dominates $y$ imply $x \succ_{R(v)} y$.
\end{enumerate}
\end{definition}

Monotonicity requires that
(a) a team should be ranked higher if it has a greater number of points (criterion~\ref{eq1}); and
(b) a team should be ranked higher compared to any other with the same number of points, an inferior goal difference, and worse head-to-head results against all teams with the same number of points (criterion~\ref{eq2}).

Monotonicity does not necessarily result in a strict ranking. The complexity of Definition~\ref{Def36} is due to cover the two different tie-breaking concepts, goal difference, and head-to-head results. For example, in association football, FIFA currently uses the former, while UEFA applies the latter rule.

\begin{definition} \label{Def37}
\emph{Preliminary round}:
The \emph{preliminary round} $\mathcal{P}$ consists of $k$ groups of round-robin tournaments $(X^1,R^1)$, $(X^2,R^2)$, \dots , $(X^k,R^k)$ such that $X^i \cap X^h = \emptyset$ for any $h \neq i$, $1 \leq h,i \leq k$.
\end{definition}

\begin{definition} \label{Def38}
\emph{Main round}:
The \emph{main round} $\mathcal{M}$ consists of $\ell$ groups of round-robin tournaments $(Y^1,S^1)$, $(Y^2,S^2)$, \dots , $(Y^\ell,S^\ell)$ such that $Y^j \cap Y^h = \emptyset$ for any $j \neq h$, $1 \leq h,j \leq \ell$.
\end{definition}

\begin{definition} \label{Def39}
\emph{Qualification rule}:
Let $\mathcal{P}$ be a preliminary round and $\mathcal{M}$ be a main round.
For any set of results $V = \left\{ v^1, v^2, \dots , v^k \right\}$ in the preliminary round such that $v^i(x,y) \neq \otimes$ for all $x,y \in X^i$ and $1 \leq i \leq k$, a \emph{qualification rule} $\mathcal{Q}$ associates the sets $Y^1, Y^2, \dots Y^\ell$ and the set of results $W = \left\{ w^1, w^2, \dots , w^\ell \right\}$ in the main round groups.
\end{definition}

Thus the qualification rule determines the composition of the groups in the main round and the set of results carried over from the preliminary round on the basis of all results in the preliminary round, that is, after all matches have been played there.

Team $x \in X^i$ is said to be \emph{qualified} to the main round if $x \in \cup_{j=1}^\ell Y^j$.

\begin{definition} \label{Def310}
\emph{Tournament with multiple group stages}:
A \emph{tournament with multiple group stages} is a triple $(\mathcal{P}, \mathcal{M}, \mathcal{Q})$ consisting of the preliminary round $\mathcal{P}$, the main round $\mathcal{M}$, and the qualification rule $\mathcal{Q}$.\footnote{~The definition contains only two group stages in order to simplify the notations. All definitions and statements can be easily generalised to more group stages, which is left to the reader.}
\end{definition}

It is natural to restrict our attention to a reasonable subset of tournaments.

\begin{definition} \label{Def311}
\emph{Regularity of a tournament with multiple group stages}:
Let $(\mathcal{P}, \mathcal{M}, \mathcal{Q})$ be a tournament with multiple group stages.
It is called \emph{regular} if under any set of results $V = \left\{ v^1, v^2, \dots , v^k \right\}$ in the preliminary round such that $v^i(x,y) \neq \otimes$ for all $x,y \in X^i$ and $1 \leq i \leq k$, the following conditions hold:
\begin{enumerate}[label=\emph{\alph*)}]
\item \label{Con_a}
$\cup_{j=1}^\ell Y^j \subseteq \cup_{i=1}^{k} X^i$;
\item \label{Con_b}
there exists a common monotonic ranking $R = R^i$ in each group $(X^i, R^i)$ of the preliminary round $\mathcal{P}$ such that $x \succ_{R(v^i)} y$ and $y \in \cup_{j=1}^\ell Y^j$ imply $x \in \cup_{j=1}^\ell Y^j$ for all $x,y \in X^i$, $1 \leq i \leq k$;
\item \label{Con_c}
$x,y \in X^i \cap Y^j$ implies $w^j(x,y) = v^i(x,y)$, where $w^j$ is the set of results in the main round group $(Y^j, S^j)$
\item \label{Con_d}
$x \in X^i$, $y \in X^h$, $i \neq h$, and $x,y \in Y^j$ imply $w^j(x,y) = \otimes$, where $w^j$ is the set of results in the main round group $(Y^j, S^j)$;
\item \label{Con_e}
there exists a common monotonic ranking $S = S^j$ in each group $(Y^j, S^j)$ of the main round $\mathcal{M}$.
\end{enumerate}
\end{definition}

The idea behind a regular tournament with multiple group stages is straightforward. Some top teams from the preliminary round groups qualify for the main round (conditions~\ref{Con_a} and \ref{Con_b}), where they are divided into new groups such that the matches already played against teams in the same main round group are carried over (conditions~\ref{Con_c} and \ref{Con_d}). Furthermore, the rankings in the preliminary and main round groups are monotonic and identical, respectively (conditions~\ref{Con_b} and \ref{Con_e}). Nonetheless, the rankings $R$ and $S$ can be different.

Perhaps these principles have inspired the decision-makers of the EHF.

\begin{definition} \label{Def312}
\emph{Manipulation}:
Let $(\mathcal{P}, \mathcal{M}, \mathcal{Q})$ be a tournament with multiple group stages.  
A team $x \in X^i$ can \emph{manipulate} the tournament if there exist two sets of results $V = \left\{ v^1, v^2, \dots , v^i, \dots , v^k \right\}$ and $\bar{V} = \left\{ v^{1}, v^2, \dots , \bar{v}^i, \dots , v^k \right\}$ in the preliminary round such that $\bar{v}_2^i(x,y) \geq v_2^i(x,y)$ and $\bar{v}_1^i(y,x) \geq v_1^i(y,x)$ for all $y \in X^i$, furthermore, $x \in Y^j$, $1 \leq j \leq \ell$ according to both $\mathcal{Q}(V)$ and $\mathcal{Q}(\bar{V})$, and either $s_{\bar{W}}(x) > s_W(x)$, or $s_{\bar{W}}(x) = s_{W}(x)$ and $gd_{\bar{W}}(x) > gd_{W}(x)$.
\end{definition}

Manipulation means that team $x$ can increase its number of points ($s_{\bar{W}}(x) > s_W(x)$), or at least improve its goal difference ($gd_{\bar{W}}(x) > gd_{W}(x)$) with preserving its number of points ($s_{\bar{W}}(x) = s_{W}(x)$) in the main round by conceding more goals in a match of the preliminary round.
The definition of manipulation may seem to be restrictive but:
(a) scoring fewer goals is not an option at a given standing of the game; and
(b) qualification to another main round group is not necessarily advantageous even if better results are carried over.

Since conceding more goals is in the hands of a team, it can be regarded as its decision variable.

\begin{definition} \label{Def313}
\emph{Strategy-proofness}:
A tournament with multiple group stages $(\mathcal{P}, \mathcal{M}, \mathcal{Q})$ is called \emph{strategy-proof} if there exists no set of group results $V = \left\{ v^1, v^2, \dots ,v^k \right\}$ under which a team $x \in \cup_{i=1}^k X^i$ can manipulate.
\end{definition}

Our central result concerns the strategy-proofness of regular tournaments with multiple group stages: while manipulation certainly worsens a team's goal difference (and sometimes its number of points, too) in its preliminary round group as the ranking rule applied here is monotonic, this might pay off in the main round, where some matches of the preliminary round are discarded---provided that the team still qualifies.

\begin{proposition} \label{Prop31}
Let $(\mathcal{P}, \mathcal{M}, \mathcal{Q})$ be a regular tournament with multiple group stages such that the following conditions hold:
\begin{itemize}
\item
there exist $x,y \in X^i \cap Y^j$ for some $1 \leq i \leq k$ and $1 \leq j \leq \ell$;
\item
there exists $u \in X^i$ but $u \notin Y^j$.
\end{itemize}
Then this tournament with multiple group stages does not satisfy strategy-proofness.
\end{proposition}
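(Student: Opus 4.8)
The plan is to exhibit, inside the preliminary group $X^i$, two sets of results $V$ and $\bar V$ that differ only in the margin by which $x$ defeats $y$, and to arrange the tie-breaking so that reducing this margin (that is, conceding more goals to $y$) flips which of $y$ and $u$ is cut from the main round. Because $x$ beats $y$ but loses to $u$, the result carried into $Y^j$ then changes from a loss against $u$ to a win against $y$, and by condition~\ref{Con_c} this becomes a gain of $\alpha-\beta>0$ points for $x$ in the main round group. This is exactly the mechanism behind the handball example of Section~\ref{Sec2}.

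Concretely, I would populate $X^i$ so that $x$, $y$, $u$ form a three-cycle in which $x$ beats $y$, $y$ beats $u$, and $u$ beats $x$, padding the group with as many always-winning teams as the structure of $\mathcal{Q}$ requires so that $\{x,y,u\}$ contest the final qualifying positions with exactly one of them eliminated. Each padding team beats $x,y,u$ by an identical margin, so $x,y,u$ stay tied on points and their pairwise goal-difference order is preserved both head-to-head and overall. Writing $m_{xy},m_{yu},m_{ux}$ for the three cycle margins, the tied set $L=\{x,y,u\}$ gives every team the same head-to-head points, so head-to-head domination is decided purely by $gd_v^L$, which here agrees in direction with the overall $gd_v$. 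I then fix $m_{yu}$ and $m_{ux}$ and let $m_{xy}$ be large in $V$ and small in $\bar V$: in $V$ the forced order is $x\succ_{R(v^i)} u\succ_{R(v^i)} y$ (so $Y^j\cap X^i$ contains $x$ and $u$), while in $\bar V$ it is $y\succ_{R(v^i)} x\succ_{R(v^i)} u$ (so $Y^j\cap X^i$ contains $x$ and $y$), with $x$ qualifying in both. Since only $v^i(x,y)$ changes, and it changes by raising $y$'s goals against $x$, the passage $V\to\bar V$ satisfies exactly the conceding-more inequalities of Definition~\ref{Def312}.

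To finish, I would keep every other preliminary group's results identical across $V$ and $\bar V$, so that by conditions~\ref{Con_c}--\ref{Con_d} all of $x$'s carried-over results against padding teams and all of its still-to-be-played main-round matches (the $\otimes$ entries) are unchanged; the only difference is the single contested carried result, a loss to $u$ (value $\beta$) under $V$ versus a win over $y$ (value $\alpha$) under $\bar V$. Hence $s_{\bar W}(x)=s_W(x)+(\alpha-\beta)>s_W(x)$, so $x$ manipulates and, by Definition~\ref{Def313}, the tournament fails strategy-proofness.

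The main obstacle is the middle step: the qualification rule is only assumed regular, so I may not invoke any particular tie-break but must force the two orderings of $\{x,y,u\}$ under \emph{every} monotonic ranking. This is precisely why the construction confines the decisive teams to a three-cycle (equalising head-to-head points) and assigns the padding teams equal margins against them (making overall and head-to-head goal difference agree): each required comparison then follows from criterion~\ref{eq2} of Definition~\ref{Def36} alone, so the flip is robust to the unknown ranking. Verifying the margin inequalities that keep $x$ above the cut in both scenarios while moving the elimination between $y$ and $u$ is routine once this alignment is secured; a secondary check is that $x$ lands in the same group $Y^j$, rather than merely qualifying, in both cases, which holds because the padding fixes the destination of the contested slot.
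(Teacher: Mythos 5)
Your proposal is correct and takes essentially the same approach as the paper: its Example~3.1 is precisely your three-cycle construction ($a$ beats $c$, $c$ beats $b$, $b$ beats $a$) in a minimal three-team group with two qualifiers, where changing the margin of the $a$ vs $c$ match flips the runner-up via head-to-head goal difference and upgrades the carried-over result from $\beta$ to $\alpha$ points. Your padding-team machinery for embedding the cycle in larger groups corresponds to the paper's closing remark that the number of groups and the number of teams in them can be increased without changing the essence of the counterexample.
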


According to the conditions of Proposition~\ref{Prop31}, the result of at least one match played in the preliminary round (between the teams $x$ and $y$) is carried over to the main round, and the results of some matches (between the teams $x$ or $y$, and $u$) are ignored in the main round. 

\begin{proof}
It works by simplifying the motivating example of Section~\ref{Sec2}.

\begin{example} \label{Examp31}
Consider a regular tournament with multiple group stages $(\mathcal{P}, \mathcal{M}, \mathcal{Q})$.
Let $(X^1,R)$ be a single round-robin group in the preliminary round with $X^1 = \{ a,b,c \}$. Therefore, the number of points for each team is between $2 \gamma$ and $2 \alpha$.

Assume that there is $\ell = 1$ group in the main round and $x \in X^1 \cap Y^1$ if and only if $\left\{ z \in X^1: x \succ_{R(v^1)} z \right\} \neq \emptyset$, namely, the group winner and the runner-up qualify for the main round from the group $(X^1,R)$.


\begin{table}[t]
\addtocounter{table}{-1}
\begin{threeparttable}
\centering
\caption[The round-robin group $(X^1,R)$ of Example~\ref{Examp31}]{The round-robin group $(X^1,R)$ of Example~\ref{Examp31}}
\label{Table3}
\rowcolors{1}{}{gray!20}
    \begin{tabularx}{\linewidth}{cc CCC CCC >{\bfseries}C} \toprule
    Position   	& Team & $a$     & $b$     & $c$ 	 & GF & GA & GD & Pts \\ \bottomrule \showrowcolors
    1     		& $a$  & ---     & 0-1     & 4-0     & 4  & 1  & $3$  & $\alpha + \gamma$ \\
    2     		& $b$  & 1-0     & ---     & 0-2     & 1  & 2  & $-1$ & $\alpha + \gamma$ \\
    3    	 	& $c$  & 0-4     & 2-0     & ---     & 2  & 4  & $-2$ & $\alpha + \gamma$ \\ \toprule 
    1     		& $a$  & ---     & 0-1     & ---     & 0  & 1  & $-1$ & $\gamma$ \\ \bottomrule
    1*     		& $a$* & ---     & ---     & 2-0*    & 2* & 0* & $2$* & $\alpha$* \\ \toprule
    \end{tabularx}
\begin{tablenotes} \footnotesize
\item
Notes: GF = Goals for; GA = Goals against; GD = Goal difference; Pts = Points. \\
The last but one row contains the group winner's benchmark results that are carried over to the main round. \\
The last row contains the group winner's alternative results that are carried over to the main round after it manipulates.
\end{tablenotes}
\end{threeparttable}
\end{table}

A possible set of results in the preliminary round group $(X^1,R)$ is shown in Table~\ref{Table3}. Team $a$ is the group-winner due to the best (head-to-head) goal difference (see criterion~\ref{eq2} of a monotonic group ranking method). Furthermore, it is considered with $s_{W}(a) = \gamma$ points in the main round, after discarding its match against team $c$, the last team in the group by criterion~\ref{eq2} of a monotonic group ranking $R$ (see the last but one row of Table~\ref{Table3}).

Let us examine what happens if $\bar{v}^1(a,c) = (2; 0)$. Then teams $a$, $b$, and $c$ remain with $\alpha + \gamma$ points, but they have head-to-head goal differences $+1$, $-1$, and $0$, respectively. Therefore, $a$ is the first and $c$ is the second according to criterion~\ref{eq2} of the monotonic group ranking $R$, and team $a$ is considered with $s_{\bar{W}}(a) = \alpha > \gamma = s_{W}(a)$ points in the main round as the last row of Table~\ref{Table3} shows.

To conclude, team $a$ has an opportunity to manipulate this regular tournament with multiple group stages under the set of group results $V$, hence it violates strategy-proofness.
\end{example}

Example~\ref{Examp31} contains only three teams, which is minimal under the conditions of Proposition~\ref{Prop31}.
The number of teams can be increased without changing the essence of the counterexample if we add some teams such that all of them have suffered a defeat of 1-0 to teams $a$, $b$, and $c$. Groups can be double round-robin tournaments instead of single ones by copying the game results above. Since a tournament is incentive incompatible if there exists a single group with the threat of manipulation, an arbitrary number of groups can be added to the example.
\end{proof}

Proposition~\ref{Prop31} remains valid if draws are allowed in a tournament with multiple group stages.

\begin{remark} \label{Rem32}
The 11th European Men's Handball Championship (EHF Euro 2014), discussed in Section~\ref{Sec2}, fits into the model presented above. The number of groups in the preliminary round is $k=4$, the number of groups in the main round is $\ell=2$, and it is a regular tournament with multiple group stages:
\begin{enumerate}[label=\emph{\alph*)}]
\item
$Y^1 \subset X^1 \cup X^2$ and $Y^2 \subset X^3 \cup X^4$;
\item
Ranking in the preliminary round groups is monotonic as it is based on the number of points with tie-breaking through head-to-head results, and the top three teams qualify for the main round;
\item
Matches played in the preliminary round against opponents which qualified to the main round are kept and remain valid for the ranking of the main round;
\item
In the main round, each team faces three teams that did not participate in its preliminary round group;
\item
Ranking in the main round groups is monotonic as it is based on the number of points with tie-breaking through head-to-head results.
\end{enumerate}
\end{remark}

\begin{example} \label{Examp32}
The 11th European Men's Handball Championship (EHF Euro 2014) is not strategy-proof.
\end{example}

\begin{proof}
The scenario presented in Section~\ref{Sec2} shows that team $\text{Poland} = x \in X^3$ can manipulate against $\text{Russia} = y \in X^3$: there exist two sets of group results $V = \left\{ v^1, v^2, v^3, v^4 \right\}$ and $\bar{V} = \left\{ v^1, v^2, \bar{v}^3, v^4 \right\}$ such that $\bar{v}^3 = v^3$ except for $\bar{v}_1^3(x,y) = \bar{v}_2^3(y,x) = 26 > 24 = v_1^3(x,y) = v_2^3(y,x)$, furthermore, Poland qualifies for the group $(X^2,S)$ according to both $\mathcal{Q}(V)$ and $\mathcal{Q}(\bar{V})$, whereas $s_{\bar{W}}(x) = 2 > 0 = s_W(x)$.

Proposition~\ref{Prop31} can also be applied due to Remark~\ref{Rem32}.
\end{proof}

Now we state a positive result, a ``pair'' of Proposition~\ref{Prop31}.

\begin{proposition} \label{Prop32}
Let $(\mathcal{P}, \mathcal{M}, \mathcal{Q})$ be a regular tournament with multiple group stages such that one of the following conditions hold:
\begin{itemize}
\item
there does not exist $x,y \in X^i \cap Y^j$ for any $1 \leq i \leq k$ and $1 \leq j \leq \ell$;
\item
$u,z \in X^i$ and $u \in Y^j$ imply $z \in Y^j$ for all $1 \leq i \leq k$.
\end{itemize}
Then this tournament with multiple group stages is strategy-proof.
\end{proposition}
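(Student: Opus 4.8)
The plan is to show that, under either hypothesis, the two quantities $s_W(x)$ and $gd_W(x)$ occurring in the manipulation criterion of Definition~\ref{Def312} are completely insensitive to the tanking move, so that neither of the strict inequalities required there can ever be attained. The starting point is the observation that, at the moment the main round groups are formed, the only results team $x$ carries into its main round group $Y^j$ are those played in the preliminary round against the members of $X^i \cap Y^j$ (regularity condition~\ref{Con_c}); every other match of $x$ in $(Y^j,S^j)$ is still marked $\otimes$ by condition~\ref{Con_d} and therefore contributes nothing to $N_W^w(x)$, $N_W^\ell(x)$, or $gd_W(x)$. Hence $s_W(x)$ and $gd_W(x)$ are determined solely by $\{ v^i(x,y) : y \in X^i \cap Y^j \}$, and I would read both structural hypotheses as constraints on this set that hold for every admissible profile.

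First I would treat the first hypothesis. If no main round group ever contains two teams of the same preliminary group, then for the qualifying team $x$ we have $X^i \cap Y^j = \{ x \}$, so $x$ carries over no match at all. Consequently $N_W^w(x) = N_W^\ell(x) = 0$ and $gd_W(x) = 0$, and the same holds verbatim under $\bar V$, since the hypothesis applies to every profile and so no teammate of $x$ joins $Y^j$ after the deviation. Thus $s_{\bar W}(x) = 0 = s_W(x)$ and $gd_{\bar W}(x) = 0 = gd_W(x)$, and neither disjunct of the manipulation condition can hold.

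Next I would treat the second hypothesis, which forces the whole of $X^i$ into a single main round group, so $X^i \cap Y^j = X^i$ and every preliminary match of $x$ is carried over. Then $s_W(x) = s_{v^i}(x)$ and $gd_W(x) = gd_{v^i}(x)$, and likewise $s_{\bar W}(x) = s_{\bar v^i}(x)$, $gd_{\bar W}(x) = gd_{\bar v^i}(x)$. The deviation $\bar v^i$ only raises the goals conceded by $x$ against each opponent (Definition~\ref{Def312}) while leaving its own goals intact, so against every $y$ a win can turn into a loss but never the reverse; since $\alpha > \beta$ this can only lower the point total, giving $s_{\bar v^i}(x) \leq s_{v^i}(x)$, and conceding weakly more goals yields $gd_{\bar v^i}(x) \leq gd_{v^i}(x)$. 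Therefore $s_{\bar W}(x) \leq s_W(x)$, and whenever $s_{\bar W}(x) = s_W(x)$ we also have $gd_{\bar W}(x) \leq gd_W(x)$, again ruling out both disjuncts.

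Since in each case no profile $V$ admits a manipulating team, the tournament is strategy-proof in the sense of Definition~\ref{Def313}. The step I expect to demand the most care is the monotonicity claim used in the second case --- that raising conceded goals weakly decreases both the point total and the goal difference --- together with the bookkeeping that guarantees the carried-over match set of $x$ is genuinely unchanged by the deviation, so that the hypotheses may legitimately be applied to both $V$ and $\bar V$. The no-draw Assumption~\ref{Assump31} is what keeps the win/loss accounting clean, preventing a conceded goal from producing an intermediate drawn outcome that would spoil the count.
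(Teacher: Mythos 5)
Your proof is correct and is, in substance, the paper's own argument: the paper disposes of Proposition~\ref{Prop32} in a single sentence --- under the first condition all preliminary results against co-qualifiers are ignored, under the second all are carried over, and monotonic accounting makes weaker performance pointless --- and your two cases simply make that sentence's bookkeeping explicit ($s_W(x)=gd_W(x)=0$ identically in the first case; $s_W(x)=s_{v^i}(x)$ and $gd_W(x)=gd_{v^i}(x)$, both weakly decreasing under the deviation, in the second). Your observation that the carried-over match set of $x$ is profile-independent under either hypothesis --- empty, or all of $X^i\setminus\{x\}$ --- is exactly the point that separates this result from Proposition~\ref{Prop31}, where the deviation changes \emph{which} matches are carried over.

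One inaccuracy is worth fixing. You assert that the deviation in Definition~\ref{Def312} leaves $x$'s own goals intact. It does not: the definition constrains only the conceded goals, $\bar v_2^i(x,y)\geq v_2^i(x,y)$ and $\bar v_1^i(y,x)\geq v_1^i(y,x)$, and the paper's own Example~\ref{Examp31} uses a deviation that lowers $x$'s scored goals from $4$ to $2$ with conceded goals unchanged. This does not break your case-2 inequalities so long as scored goals can only stay or fall: then each match's goal difference weakly decreases, a win can only turn into a loss (no draws, by Assumption~\ref{Assump31}), and your conclusions $s_{\bar v^i}(x)\leq s_{v^i}(x)$ and $gd_{\bar v^i}(x)\leq gd_{v^i}(x)$ stand, since $\alpha>\beta$ and the total number of played matches is fixed by Definition~\ref{Def39}. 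But under the fully literal reading, in which $\bar v_1^i(x,y)$ is unconstrained and may \emph{increase}, your monotonicity step fails (a loss could become a win with conceded goals unchanged), and indeed Proposition~\ref{Prop32} itself would then be false under the second condition; so the intended reading --- the deviation is a uniformly weaker performance --- is forced, and your proof should state that reading explicitly rather than attribute it to the letter of Definition~\ref{Def312}.
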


\begin{proof}
If all preliminary round results achieved against other qualified teams are ignored (first condition), or carried over to the main round (second condition), then it makes no sense to perform weaker in the preliminary round because of the monotonicity of the group rankings in both rounds.
\end{proof}

Proposition~\ref{Prop32} implies that teams qualifying from the same preliminary round group should be drawn into different groups in the main round (which is guaranteed if only one team qualifies from each preliminary round group), or all teams from a given preliminary round group should qualify for the same main round group to avoid incentive incompatibility.

It is also clear from the match discussed in Section~\ref{Sec2} that head-to-head results cannot be used to break a tie in the main round between two teams qualified from the same preliminary round group, otherwise there remain some incentives to influence the set of qualified teams.

Our main result is somewhat related to---but entirely independent of---the finding of \citet{Vong2017} that in general multi-stage tournaments, the necessary and sufficient condition of strategy-proofness is to allow only the top-ranked player to qualify from each group. However, in the model of \citet{Vong2017}, teams tank in order to meet preferred opponents in the next round, thus they only gain in expected value. Contrarily, Definition~\ref{Def313} requires that a team cannot be strictly better off by a lower effort.

\section{A family of incentive incompatible designs} \label{Sec4}

The theoretical results in Section~\ref{Sec3} uncover that there is no straightforward way to guarantee the strategy-proofness of tournaments with multiple group stages and results that are carried over, in contrast to tournament systems consisting of multiple round-robin and knockout tournaments \citep{DagaevSonin2018}, or group-based qualification systems \citep{Csato2020c}.

According to Proposition~\ref{Prop32}, incentive compatibility will be satisfied if either all points scored in the preliminary round are considered in the main round (directly or after an arbitrary monotonic transformation), or all of them are discarded, which is against the essence of these tournaments.
Consequently, the only reasonable solution is to carry over \emph{all} preliminary round results to the main round, perhaps after a monotonic transformation, regardless that some matches were played against teams already eliminated from the tournament.

However, if all results are carried through, then the subsequent phase loses a bit of excitement because there will be greater variation in points at the commencement of this stage, and the teams entering bottom will find it much harder to catch up with the teams entering the stage on top.

This effect can be mitigated by carrying over only half of the points from the preliminary round. The idea comes from the \href{https://en.wikipedia.org/wiki/Belgian_First_Division_A}{Belgian First Division A}, the top league competition for association football clubs in Belgium, where the sixteen participants play a double round-robin tournament in the regular season, followed by a championship play-off for the first six teams such that the points obtained during the regular season are halved. A similar policy is applied currently in the top-tier association football leagues in Poland, Romania, and Serbia \citep{LasekGagolewski2018}. However, in contrast to our model in Section~\ref{Sec3}, the teams advancing to the championship play-off play again in a round-robin format.

For tie-breaking purposes, we suggest retaining the number of goals scored and conceded in the preliminary round. Theoretically, they can also be discarded, but it seems to be unfair when there was a match played in the preliminary round against a team from the same main round group. In the case of Belgian First Division A, goal difference is not among the tie-breaking criteria in the championship play-offs.

Therefore, two incentive compatible versions of each tournament with multiple group stages will be considered without changing the set of matches played:
(1) carrying over all results from the preliminary round; and
(2) carrying over half of the points from the preliminary round.
The consequences of these modifications will be explored here as a kind of cost-benefit analysis via simulations, implemented in the framework of \citet{Csato2021b}. The latter study attempts to compare four tournament formats of the World Men's Handball Championships with respect to several sporting criteria such as selection ability, and competitiveness and quality of the final.

As Table~\ref{Table1} uncovers, the tournament has used three different designs containing multiple group stages. Since the one used in \href{https://en.wikipedia.org/wiki/2003_World_Men\%27s_Handball_Championship}{2003} suffers from various problems and seems not to be efficacious \citep{Csato2021b}, the following two are studied:
\begin{itemize}
\item
Format $G66$:
This design, presented in Figure~\ref{Fig_A2}, has been used first in the \href{https://en.wikipedia.org/wiki/2005_World_Men's_Handball_Championship}{2005 World Men's Handball Championship} and has been applied in the \href{https://en.wikipedia.org/wiki/2009_World_Men's_Handball_Championship}{2009}, \href{https://en.wikipedia.org/wiki/2011_World_Men's_Handball_Championship}{2011}, and \href{https://en.wikipedia.org/wiki/2019_World_Men's_Handball_Championship}{2019} tournaments, too. \\
The preliminary round (see~Figure~\ref{Fig_A2a}) consists of four groups of six teams each such that the top three teams qualify for the main round. The main round consists of two groups of six teams, each created from two preliminary round groups. The top two teams of every main round group advance to the semifinals in the knockout stage (see~Figure~\ref{Fig_A2b}).
The name of the format comes from the size of the groups in its two rounds.

\item
Format $G46$:
This design, presented in Figure~\ref{Fig_A3}, has been used in the \href{https://en.wikipedia.org/wiki/2007_World_Men's_Handball_Championship}{2007 World Men's Handball Championship}. \\
The teams are drawn into six groups of four teams each in the preliminary round (see~Figure~\ref{Fig_A3a}) such that the top two teams proceed to the main round. The main round consists of two groups of six teams, each created from three preliminary round groups. The top four teams of every main round group advance to the quarterfinals in the knockout stage (see~Figure~\ref{Fig_A3b}).\footnote{~The format of the \href{https://en.wikipedia.org/wiki/2020_European_Men's_Handball_Championship}{2020 European Men's Handball Championship} is almost the same as $G46$, the sole difference being that only the two top teams from the two main round groups qualify for the semifinals in the latter tournament.}
Again, the name of the format comes from the size of the groups in its two rounds.
\end{itemize}

While the knockout stage of both tournament formats is immediately determined by the preceding group stages, the competing teams should be drawn into groups before the start of the tournament, thus the seeding regime may also affect the outcome \citep{ScarfYusof2011}. On the other hand, seeding is clearly independent of how the results are carried over from the preliminary round to the main round.

Hence, similarly to \citet{Csato2021b}, two variants of each tournament design, called \emph{seeded} and \emph{unseeded}, are considered.
In the seeded version, the preliminary round groups are drawn such that in the case of groups with $k$ teams ($k=4$ for $G46$ and $k=6$ for $G66$), the strongest $k$ teams are placed in Pot 1, the next strongest $k$ teams in Pot 2, and so on. Then each group gets one team from each pot.
The unseeded version divides the teams into the pots randomly. Therefore, a strong team, allocated in a harsh group, will have more difficulties in qualifying than a ``lucky'' weak team, allocated in an easier group.

\begin{table}[t]
\centering
\caption{Tournament designs considered in the simulations}
\label{Table4}
\rowcolors{1}{}{gray!20}
\begin{tabularx}{0.9\textwidth}{LC ll} \toprule \hiderowcolors
    Notation & Format & Seeding policy & Description \\ \bottomrule \showrowcolors
    $G66$/S & $G66$   & seeded & original incentive incompatible \\
    $G66$/R & $G66$   & unseeded & original incentive incompatible \\
    $G66 \diamond$/S & $G66$   & seeded & all points are carried over \\
    $G66 \diamond$/R & $G66$   & unseeded & all points are carried over \\
    $G66 \star$/S & $G66$   & seeded & half of all points are carried over \\
    $G66 \star$/R & $G66$   & unseeded & half of all points are carried over \\ \bottomrule
    $G46$/S & $G46$   & seeded & original incentive incompatible \\
    $G46$/R & $G46$   & unseeded & original incentive incompatible \\
    $G46 \diamond$/S & $G46$   & seeded & all points are carried over \\
    $G46 \diamond$/R & $G46$   & unseeded & all points are carried over \\
    $G46 \star$/S & $G46$   & seeded & half of all points are carried over \\
    $G46 \star$/R & $G46$   & unseeded & half of all points are carried over \\ \bottomrule
\end{tabularx}
\end{table}

Table~\ref{Table4} summarises the twelve tournament designs to be analysed.

The results of the matches are determined by the \emph{a priori} fixed winning probabilities, which depend on the pre-tournament ranks of the teams $1 \leq i,j \leq 24$, such that a stronger team defeats a weaker team with a higher probability than vice versa.

Further details of the simulation procedure can be found in \citet{Csato2021b}. According to the arguments presented there, all simulations have been implemented with one million runs.

\begin{definition} \label{Def41}
\emph{Tournament metrics}: The tournament designs are compared on the basis of some standard metrics: 
\begin{itemize}
\item
the average pre-tournament rank of the winner, the second-, the third- and the fourth-placed teams;
\item
the expected quality of the final: the sum of the finalists' pre-tournament ranks;
\item
the expected competitive balance of the final: the difference between the finalists' pre-tournament ranks.
\end{itemize}
\end{definition}
Therefore, a lower value of all measures can be preferred.

\begin{figure}[t!]
\centering

\begin{subfigure}{\textwidth}
\caption{Tournament format $G66$}
\label{Fig2a}

\begin{tikzpicture}
\begin{axis}[
name = axis1,
title = Average rank of \#1,
title style = {align=center, font=\small},
width = 0.5\textwidth,
height = 0.3\textwidth,
ymin = 3.25,
ymax = 3.75,
ytick distance = 0.1,
legend style = {font=\small,at={(0.2,-0.15)},anchor=north west,legend columns=6},
symbolic x coords = {A},
xtick = \empty,
ymajorgrids = true,   
ybar = 15pt,
bar width = 15pt,               
]      

\addlegendentry{$G66$/S$\quad$}
\addplot [blue, pattern color = blue, pattern = grid, very thick] coordinates {
(A,3.478447)
};
\addlegendentry{$G66$/R$\quad$}
\addplot [blue, pattern color = blue, pattern = dots, very thick] coordinates {
(A,3.577826)
};
\addlegendentry{$G66 \diamond$/S$\quad$}
\addplot [red, pattern color = red, pattern = grid, very thick] coordinates {
(A,3.353867)
};
\addlegendentry{$G66 \diamond$/R$\quad$}
\addplot [red, pattern color = red, pattern = dots, very thick] coordinates {
(A,3.470588)
};
\addlegendentry{$G66 \star$/S$\quad$}
\addplot [ForestGreen, pattern color = ForestGreen, pattern = grid, very thick] coordinates {
(A,3.424642)
};
\addlegendentry{$G66 \star$/R}
\addplot [ForestGreen, pattern color = ForestGreen, pattern = dots, very thick] coordinates {
(A,3.497109)
};

\legend{}
\end{axis}

\begin{axis}[
at = {(axis1.south east)},
xshift = 0.1\textwidth,
title = Average rank of \#2,
title style = {align=center, font=\small},
width = 0.5\textwidth,
height = 0.3\textwidth,
ymin = 4.55,
ymax = 5.55,
legend style = {font=\small, at={(0,-0.1)},anchor=north west,legend columns=6},
symbolic x coords = {A},
xtick = \empty,
ymajorgrids = true,   
ybar = 15pt,
bar width = 15pt,                 
]      

\addlegendentry{$G66$/S$\quad$}
\addplot [blue, pattern color = blue, pattern = grid, very thick] coordinates {
(A,4.881763)
};
\addlegendentry{$G66$/R$\quad$}
\addplot [blue, pattern color = blue, pattern = dots, very thick] coordinates {
(A,5.101757)
};
\addlegendentry{$G66 \diamond$/S$\quad$}
\addplot [red, pattern color = red, pattern = grid, very thick] coordinates {
(A,4.621875)
};
\addlegendentry{$G66 \diamond$/R$\quad$}
\addplot [red, pattern color = red, pattern = dots, very thick] coordinates {
(A,4.85709)
};
\addlegendentry{$G66 \star$/S$\quad$}
\addplot [ForestGreen, pattern color = ForestGreen, pattern = grid, very thick] coordinates {
(A,4.762881)
};
\addlegendentry{$G66 \star$/R}
\addplot [ForestGreen, pattern color = ForestGreen, pattern = dots, very thick] coordinates {
(A,4.9105)
};

\legend{}
\end{axis}
\end{tikzpicture}

\vspace{0.25cm}
\begin{tikzpicture}
\begin{axis}[
name = axis3,
title = Average rank of \#3,
title style = {align=center, font=\small},
width = 0.5\textwidth,
height = 0.3\textwidth,
ymin = 4.55,
ymax = 5.55,
legend style = {font=\small,at={(0.12,-0.1)},anchor=north west,legend columns=6},
symbolic x coords = {A},
xtick = \empty,
ymajorgrids = true,   
ybar = 15pt,
bar width = 15pt,          
]      

\addlegendentry{$G66$/S$\quad$}
\addplot [blue, pattern color = blue, pattern = grid, very thick] coordinates {
(A,4.936639)
};
\addlegendentry{$G66$/R$\quad$}
\addplot [blue, pattern color = blue, pattern = dots, very thick] coordinates {
(A,5.368936)
};
\addlegendentry{$G66 \diamond$/S$\quad$}
\addplot [red, pattern color = red, pattern = grid, very thick] coordinates {
(A,4.666801)
};
\addlegendentry{$G66 \diamond$/R$\quad$}
\addplot [red, pattern color = red, pattern = dots, very thick] coordinates {
(A,5.118653)
};
\addlegendentry{$G66 \star$/S$\quad$}
\addplot [ForestGreen, pattern color = ForestGreen, pattern = grid, very thick] coordinates {
(A,4.797731)
};
\addlegendentry{$G66 \star$/R}
\addplot [ForestGreen, pattern color = ForestGreen, pattern = dots, very thick] coordinates {
(A,5.172972)
};
\end{axis}
\begin{axis}[
at = {(axis3.south east)},
xshift = 0.1\textwidth,
title = Average rank of \#4,
title style = {align=center, font=\small},
width = 0.5\textwidth,
height = 0.3\textwidth,
ymin = 6.3,
ymax = 8.3,
legend style = {font=\small, at={(-0.5,0)},anchor=north west,legend columns=6},
symbolic x coords = {A},
xtick = \empty,
ymajorgrids = true,   
ybar = 15pt,
bar width = 15pt,     
]      

\addlegendentry{$G66$/S$\quad$}
\addplot [blue, pattern color = blue, pattern = grid, very thick] coordinates {
(A,7.13891)
};
\addlegendentry{$G66$/R$\quad$}
\addplot [blue, pattern color = blue, pattern = dots, very thick] coordinates {
(A,7.87996)
};
\addlegendentry{$G66 \diamond$/S$\quad$}
\addplot [red, pattern color = red, pattern = grid, very thick] coordinates {
(A,6.595498)
};
\addlegendentry{$G66 \diamond$/R$\quad$}
\addplot [red, pattern color = red, pattern = dots, very thick] coordinates {
(A,7.369563)
};
\addlegendentry{$G66 \star$/S$\quad$}
\addplot [ForestGreen, pattern color = ForestGreen, pattern = grid, very thick] coordinates {
(A,6.841237)
};
\addlegendentry{$G66 \star$/R}
\addplot [ForestGreen, pattern color = ForestGreen, pattern = dots, very thick] coordinates {
(A,7.481567)
};
\legend{}
\end{axis}
\end{tikzpicture}

\end{subfigure}

\vspace{0.5cm}
\begin{subfigure}{\textwidth}
\caption{Tournament format $G46$}
\label{Fig2b}

\begin{tikzpicture}
\begin{axis}[
name = axis1,
title = Average rank of \#1,
title style = {align=center, font=\small},
width = 0.5\textwidth,
height = 0.3\textwidth,
ymin = 3.25,
ymax = 3.75,
ytick distance = 0.1,
legend style = {font=\small,at={(0.2,-0.15)},anchor=north west,legend columns=6},
symbolic x coords = {A},
xtick = \empty,
ymajorgrids = true,   
ybar = 15pt,
bar width = 15pt,  
]      

\addlegendentry{$G46$/S$\quad$}
\addplot [blue, pattern color = blue, pattern = grid, very thick] coordinates {
(A,3.609633)
};
\addlegendentry{$G46$/R$\quad$}
\addplot [blue, pattern color = blue, pattern = dots, very thick] coordinates {
(A,3.699409)
};
\addlegendentry{$G46 \diamond$/S$\quad$}
\addplot [red, pattern color = red, pattern = grid, very thick] coordinates {
(A,3.586342)
};
\addlegendentry{$G46 \diamond$/R$\quad$}
\addplot [red, pattern color = red, pattern = dots, very thick] coordinates {
(A,3.678569)
};
\addlegendentry{$G46 \star$/S$\quad$}
\addplot [ForestGreen, pattern color = ForestGreen, pattern = grid, very thick] coordinates {
(A,3.611548)
};
\addlegendentry{$G46 \star$/R}
\addplot [ForestGreen, pattern color = ForestGreen, pattern = dots, very thick] coordinates {
(A,3.680193)
};
\legend{}
\end{axis}

\begin{axis}[
at = {(axis1.south east)},
xshift = 0.1\textwidth,
title = Average rank of \#2,
title style = {align=center, font=\small},
width = 0.5\textwidth,
height = 0.3\textwidth,
ymin = 4.55,
ymax = 5.55,
legend style = {font=\small, at={(0,-0.1)},anchor=north west,legend columns=6},
symbolic x coords = {A},
xtick = \empty,
ymajorgrids = true,   
ybar = 15pt,
bar width = 15pt,                
]      

\addlegendentry{$G46$/S$\quad$}
\addplot [blue, pattern color = blue, pattern = grid, very thick] coordinates {
(A,5.075566)
};
\addlegendentry{$G46$/R$\quad$}
\addplot [blue, pattern color = blue, pattern = dots, very thick] coordinates {
(A,5.309854)
};
\addlegendentry{$G46 \diamond$/S$\quad$}
\addplot [red, pattern color = red, pattern = grid, very thick] coordinates {
(A,5.021251)
};
\addlegendentry{$G46 \diamond$/R$\quad$}
\addplot [red, pattern color = red, pattern = dots, very thick] coordinates {
(A,5.252708)
};
\addlegendentry{$G46 \star$/S$\quad$}
\addplot [ForestGreen, pattern color = ForestGreen, pattern = grid, very thick] coordinates {
(A,5.061443)
};
\addlegendentry{$G46 \star$/R}
\addplot [ForestGreen, pattern color = ForestGreen, pattern = dots, very thick] coordinates {
(A,5.256226)
};

\legend{}
\end{axis}
\end{tikzpicture}

\vspace{0.25cm}
\begin{tikzpicture}
\begin{axis}[
name = axis3,
title = Average rank of \#3,
title style = {align=center, font=\small},
width = 0.5\textwidth,
height = 0.3\textwidth,
ymin = 4.55,
ymax = 5.55,
legend style = {font=\small,at={(0.12,-0.1)},anchor=north west,legend columns=6},
symbolic x coords = {A},
xtick = \empty,
ymajorgrids = true,   
ybar = 15pt,
bar width = 15pt,          
]      

\addlegendentry{$G46$/S$\quad$}
\addplot [blue, pattern color = blue, pattern = grid, very thick] coordinates {
(A,5.187268)
};
\addlegendentry{$G46$/R$\quad$}
\addplot [blue, pattern color = blue, pattern = dots, very thick] coordinates {
(A,5.443733)
};
\addlegendentry{$G46 \diamond$/S$\quad$}
\addplot [red, pattern color = red, pattern = grid, very thick] coordinates {
(A,5.121921)
};
\addlegendentry{$G46 \diamond$/R$\quad$}
\addplot [red, pattern color = red, pattern = dots, very thick] coordinates {
(A,5.381679)
};
\addlegendentry{$G46 \star$/S$\quad$}
\addplot [ForestGreen, pattern color = ForestGreen, pattern = grid, very thick] coordinates {
(A,5.164512)
};
\addlegendentry{$G46 \star$/R}
\addplot [ForestGreen, pattern color = ForestGreen, pattern = dots, very thick] coordinates {
(A,5.388542)
};
\end{axis}
\begin{axis}[
at = {(axis3.south east)},
xshift = 0.1\textwidth,
title = Average rank of \#4,
title style = {align=center, font=\small},
width = 0.5\textwidth,
height = 0.3\textwidth,
ymin = 6.3,
ymax = 8.3,
legend style = {font=\small, at={(-0.5,-0.2)},anchor=north west,legend columns=6},
symbolic x coords = {A},
xtick = \empty,
ymajorgrids = true,   
ybar = 15pt,
bar width = 15pt,     
]      

\addlegendentry{$G46$/S$\quad$}
\addplot [blue, pattern color = blue, pattern = grid, very thick] coordinates {
(A,7.471449)
};
\addlegendentry{$G46$/R$\quad$}
\addplot [blue, pattern color = blue, pattern = dots, very thick] coordinates {
(A,8.020623)
};
\addlegendentry{$G46 \diamond$/S$\quad$}
\addplot [red, pattern color = red, pattern = grid, very thick] coordinates {
(A,7.319355)
};
\addlegendentry{$G46 \diamond$/R$\quad$}
\addplot [red, pattern color = red, pattern = dots, very thick] coordinates {
(A,7.873746)
};
\addlegendentry{$G46 \star$/S$\quad$}
\addplot [ForestGreen, pattern color = ForestGreen, pattern = grid, very thick] coordinates {
(A,7.409856)
};
\addlegendentry{$G46 \star$/R}
\addplot [ForestGreen, pattern color = ForestGreen, pattern = dots, very thick] coordinates {
(A,7.880866)
};
\legend{}
\end{axis}
\end{tikzpicture}
\end{subfigure}

\caption{Expected pre-tournament rank of the first four teams}
\label{Fig2}

\end{figure}


Figure~\ref{Fig2} shows the average pre-tournament rank of the first four teams.
If all points are carried over from the preliminary round, then the result of the tournament becomes more predetermined as the expected rank slightly decreases.
Preserving only half of these points substantially mitigates this loss of excitement, except in the unseeded variant of format $G46$. On the other hand, the average rank of the winner is even higher in the case of seeded $G46$ according to this solution than under the original incentive incompatible design.
In addition, carrying over half of all points minimises the effect of the seeding policy, which seems to be desirable because it is a factor not influenced by the competitors.

\begin{figure}[t!]
\centering

\begin{subfigure}{\textwidth}
\caption{Tournament format $G66$}
\label{Fig3a}

\begin{tikzpicture}
\begin{axis}[
name = axis3,
title = Expected quality,
title style = {align=center, font=\small},
width = 0.5\textwidth,
height = 0.4\textwidth,
ymin = 7.8,
ymax = 9.2,
ytick distance = 0.25,
legend style = {font=\small,at={(0.12,-0.1)},anchor=north west,legend columns=6},
symbolic x coords = {A},
xtick = \empty,
ymajorgrids = true,   
ybar = 15pt,
bar width = 15pt,         
]      

\addlegendentry{$G66$/S$\quad$}
\addplot [blue, pattern color = blue, pattern = grid, very thick] coordinates {
(A,8.36021)
};
\addlegendentry{$G66$/R$\quad$}
\addplot [blue, pattern color = blue, pattern = dots, very thick] coordinates {
(A,8.679583)
};
\addlegendentry{$G66 \diamond$/S$\quad$}
\addplot [red, pattern color = red, pattern = grid, very thick] coordinates {
(A,7.975742)
};
\addlegendentry{$G66 \diamond$/R$\quad$}
\addplot [red, pattern color = red, pattern = dots, very thick] coordinates {
(A,8.327678)
};
\addlegendentry{$G66 \star$/S$\quad$}
\addplot [ForestGreen, pattern color = ForestGreen, pattern = grid, very thick] coordinates {
(A,8.187523)
};
\addlegendentry{$G66 \star$/R}
\addplot [ForestGreen, pattern color = ForestGreen, pattern = dots, very thick] coordinates {
(A,8.407609)
};
\end{axis}
\begin{axis}[
at = {(axis3.south east)},
xshift = 0.1\textwidth,
title = Expected competitive balance,
title style = {align=center, font=\small},
width = 0.5\textwidth,
height = 0.4\textwidth,
ymin = 3.45,
ymax = 4.15,
legend style = {font=\small, at={(-0.5,-0.1)},anchor=north west,legend columns=6},
symbolic x coords = {A},
xtick = \empty,
ymajorgrids = true,   
ybar = 15pt,
bar width = 15pt,   
]      

\addlegendentry{$G66$/S$\quad$}
\addplot [blue, pattern color = blue, pattern = grid, very thick] coordinates {
(A,3.777522)
};
\addlegendentry{$G66$/R$\quad$}
\addplot [blue, pattern color = blue, pattern = dots, very thick] coordinates {
(A,3.941715)
};
\addlegendentry{$G66 \diamond$/S$\quad$}
\addplot [red, pattern color = red, pattern = grid, very thick] coordinates {
(A,3.569972)
};
\addlegendentry{$G66 \diamond$/R$\quad$}
\addplot [red, pattern color = red, pattern = dots, very thick] coordinates {
(A,3.75017)
};
\addlegendentry{$G66 \star$/S$\quad$}
\addplot [ForestGreen, pattern color = ForestGreen, pattern = grid, very thick] coordinates {
(A,3.698637)
};
\addlegendentry{$G66 \star$/R}
\addplot [ForestGreen, pattern color = ForestGreen, pattern = dots, very thick] coordinates {
(A,3.790699)
};
\legend{}
\end{axis}
\end{tikzpicture}

\end{subfigure}

\vspace{0.5cm}
\begin{subfigure}{\textwidth}
\caption{Tournament format $G46$}
\label{Fig3b}

\begin{tikzpicture}
\begin{axis}[
name = axis3,
title = Expected quality,
title style = {align=center, font=\small},
width = 0.5\textwidth,
height = 0.4\textwidth,
ymin = 7.8,
ymax = 9.2,
ytick distance = 0.25,
legend style = {font=\small,at={(0.12,-0.1)},anchor=north west,legend columns=6},
symbolic x coords = {A},
xtick = \empty,
ymajorgrids = true,   
ybar = 15pt,
bar width = 15pt,         
]      

\addlegendentry{$G46$/S$\quad$}
\addplot [blue, pattern color = blue, pattern = grid, very thick] coordinates {
(A,8.685199)
};
\addlegendentry{$G46$/R$\quad$}
\addplot [blue, pattern color = blue, pattern = dots, very thick] coordinates {
(A,9.009263)
};
\addlegendentry{$G46 \diamond$/S$\quad$}
\addplot [red, pattern color = red, pattern = grid, very thick] coordinates {
(A,8.607593)
};
\addlegendentry{$G46 \diamond$/R$\quad$}
\addplot [red, pattern color = red, pattern = dots, very thick] coordinates {
(A,8.931277)
};
\addlegendentry{$G46 \star$/S$\quad$}
\addplot [ForestGreen, pattern color = ForestGreen, pattern = grid, very thick] coordinates {
(A,8.672991)
};
\addlegendentry{$G46 \star$/R}
\addplot [ForestGreen, pattern color = ForestGreen, pattern = dots, very thick] coordinates {
(A,8.936419)
};
\end{axis}
\begin{axis}[
at = {(axis3.south east)},
xshift = 0.1\textwidth,
title = Expected competitive balance,
title style = {align=center, font=\small},
width = 0.5\textwidth,
height = 0.4\textwidth,
ymin = 3.45,
ymax = 4.15,
legend style = {font=\small, at={(-0.5,-0.1)},anchor=north west,legend columns=6},
symbolic x coords = {A},
xtick = \empty,
ymajorgrids = true,   
ybar = 15pt,
bar width = 15pt,   
]      

\addlegendentry{$G46$/S$\quad$}
\addplot [blue, pattern color = blue, pattern = grid, very thick] coordinates {
(A,3.883661)
};
\addlegendentry{$G46$/R$\quad$}
\addplot [blue, pattern color = blue, pattern = dots, very thick] coordinates {
(A,4.086163)
};
\addlegendentry{$G46 \diamond$/S$\quad$}
\addplot [red, pattern color = red, pattern = grid, very thick] coordinates {
(A,3.834947)
};
\addlegendentry{$G46 \diamond$/R$\quad$}
\addplot [red, pattern color = red, pattern = dots, very thick] coordinates {
(A,4.035503)
};
\addlegendentry{$G46 \star$/S$\quad$}
\addplot [ForestGreen, pattern color = ForestGreen, pattern = grid, very thick] coordinates {
(A,3.873039)
};
\addlegendentry{$G46 \star$/R}
\addplot [ForestGreen, pattern color = ForestGreen, pattern = dots, very thick] coordinates {
(A,4.038999)
};
\legend{}
\end{axis}
\end{tikzpicture}
\end{subfigure}

\caption{Characteristics of the tournament final}
\label{Fig3}

\end{figure}

Figure~\ref{Fig3} reinforces these findings by focusing on the final of the tournament: if half of all points scored in the preliminary round are carried over instead of only the results against the teams qualified for the main round, then the final may become a bit more boring but usually involves stronger teams. It decreases the influence of the seeding regime again, especially in the format $G66$.

Following \citet{ScarfYusofBilbao2009}, we have made a robustness check by calculating the metrics for more and less competitive tournaments than the baseline version, in the same way as \citet{Csato2021b}. The qualitative results of these simulations coincide with the findings from Figures~\ref{Fig2} and \ref{Fig3}, hence our observations are independent of the distribution of teams' strength.

The comparison of Figures~\ref{Fig2a} and \ref{Fig2b}, as well as Figures~\ref{Fig3a} and \ref{Fig3b}, uncovers that the choice of the tournament format is more important than the effect of how points are carried over to the main round (see the scales on the vertical axis). Since there is no consensus in the former, at least for the Men's (Women's) World Handball Championships, it does not make much sense to dispute the use of the suggested incentive compatible variants of tournaments with multiple group stages on the basis of the tournament metrics considered.

To conclude, the price of guaranteeing incentive compatibility seems to be negligible---at least, compared to other features of the design like the particular tournament format or the seeding policy. We propose to carry over half of the points scored in the preliminary round. This solution has another interesting implication: it minimises the role of seeding (the difference between the seeded and unseeded variants is the smallest among all designs), which can be advantageous because the true ranking of the teams is never known, and misaligned classification usually leads to unfairness.

On the other hand, carrying over all results from the preliminary round (even after a monotonic transformation) means that the outcome in the main round is at least partially dependent on the strength of opponents in the preliminary group, which might raise questions about fairness if these groups are different in terms of their strength. Fortunately, some authors have recently made useful proposals to balance the difficulty levels of the groups \citep{CeaDuranGuajardoSureSiebertZamorano2020, Guyon2015a, LalienaLopez2019}.

Nonetheless, perfect balance can never be realistically achieved. Therefore, we have examined a scenario to reveal how the proposed family of incentive compatible designs affects competitive balance.\footnote{~We are grateful for an anonymous referee who has suggested this analysis.}
It is assumed that there are two strong teams, one identified correctly with the pre-tournament rank 1, while the other is thought to be only the 13th. The winning probabilities against all other teams are computed as before, according to the model of \citet{Csato2021b}. The unseeded tournament formats are uninteresting since the strongest two teams will play against opponents of the same strength on average. However, compared to the correctly identified strong team, the lower-ranked strong team should face a top team (ranked between 1 and 4 in $G66$, or between 1 and 6 in $G46$) in the preliminary round instead of a middle team (ranked between 13 and 16 in $G66$, or between 13 and 18 in $G46$).

\begin{figure}[t]
\centering

\begin{tikzpicture}
\begin{axis}[width = 0.95\textwidth, 
height = 0.6\textwidth,
xmin = 0.5,
xmax = 16.5,
ymajorgrids,
yticklabel style = {scaled ticks = false, /pgf/number format/fixed, /pgf/number format/precision=4},
xlabel = The pre-tournament rank of the team,
xlabel style = {font = \small},
ylabel = Changes in the probability of winning,
xlabel style = {font = \small},
legend style = {at = {(0.5,-0.15)},anchor = north,legend columns = 6,font = \small}
]
\draw[very thick](axis cs:\pgfkeysvalueof{/pgfplots/xmin},0)  -- (axis cs:\pgfkeysvalueof{/pgfplots/xmax},0);
\addlegendentry{$G66 \diamond$/S$\qquad$}
\addplot [black,thick,only marks,mark=otimes*, mark size=2pt] coordinates {
(1,0.005901)
(2,0.002508)
(3,0.001258)
(4,0.000254)
(5,-0.001226)
(6,-0.001749)
(7,-0.001833)
(8,-0.001708)
(9,-0.00138)
(10,-0.001291)
(11,-0.000827)
(12,-0.000927)
(13,0.002624)
(14,-0.000476)
(15,-0.000369)
(16,-0.00026)
(17,-0.000151)
(18,-0.000137)
(19,-0.00007)
(20,-0.000055)
(21,-0.000037)
(22,-0.000023)
(23,-0.000019)
(24,-0.000007)
};
\addlegendentry{$G66 \star$/S$\qquad$}
\addplot [ForestGreen,very thick,only marks,mark=x, mark size=3pt] coordinates {
(1,0.001565)
(2,-0.000279)
(3,-0.001049)
(4,-0.001408)
(5,0.000361)
(6,-0.000484)
(7,-0.000929)
(8,-0.000742)
(9,-0.000447)
(10,-0.000625)
(11,-0.000648)
(12,-0.000495)
(13,0.006438)
(14,-0.000401)
(15,-0.000281)
(16,-0.000221)
(17,-0.000098)
(18,-0.000095)
(19,-0.000053)
(20,-0.000028)
(21,-0.000038)
(22,-0.000021)
(23,-0.000015)
(24,-0.000007)
};
\addlegendentry{$G46 \diamond$/S$\qquad$}
\addplot[red,thick,only marks,mark=diamond*, mark size=3pt] coordinates {
(1,0.000315)
(2,0.000722)
(3,-0.000296)
(4,0.000562)
(5,0.000025)
(6,0.000306)
(7,0.000304)
(8,-0.000055)
(9,-0.000343)
(10,-0.000224)
(11,-0.000505)
(12,-0.00039)
(13,0.000285)
(14,-0.00026)
(15,-0.000186)
(16,-0.000035)
(17,-0.000086)
(18,-0.000026)
(19,-0.00006)
(20,-0.000007)
(21,0.000004)
(22,-0.000022)
(23,-0.000012)
(24,-0.000016)
};
\addlegendentry{$G46 \star$/S}
\addplot[blue,thick,only marks,mark=star, mark size=3pt] coordinates {
(1,-0.000198)
(2,-0.000253)
(3,0.000202)
(4,-0.000225)
(5,-0.000745)
(6,-0.000272)
(7,0.000775)
(8,0.000259)
(9,-0.000053)
(10,-0.000026)
(11,-0.000241)
(12,-0.00021)
(13,0.001294)
(14,-0.000008)
(15,-0.000091)
(16,-0.000068)
(17,-0.000024)
(18,-0.000022)
(19,-0.000032)
(20,-0.000025)
(21,0)
(22,-0.000011)
(23,-0.000016)
(24,-0.00001)
};
\end{axis}
\end{tikzpicture}

\caption{Changes in the winning probability due to the strategy-proof mechanisms}
\label{Fig4}

\end{figure}


Figure~\ref{Fig4} presents the effect of our proposals on the winning probability of the top 16 teams. Crucially, the impact is marginal, for instance, the choice of the tournament format ($G66$ or $G46$) has a much greater role. Carrying over all points from the preliminary round somewhat worsens competitive balance in design $G66$ as it favours only the five best teams, including the one mistakenly ranked 13th. However, the comparison of the two strong teams (1 and 13) reveals that carrying over only half of all points somewhat reduces the inequality in group strength. The impact is even less significant for the tournament format $G46$.
Hence, organisers should not worry about the unfairness caused by this family of incentive compatible mechanisms, which guarantees an important theoretical requirement by excluding any instances where a team might be better off by losing.

\section{Discussion} \label{Sec5}

Tournament design is an important topic of economics and operations research \citep{Csato2021a}. We have argued that organisers should not miss analysing incentive compatibility because a sporting contest is supposed to be genuine, and is sold to the public as having full integrity. While the actual probability of misaligned incentives can be relatively small, and the audience does not necessarily recognise the problem, it is not worth risking a potential scandal with enormous financial and reputational costs. 
According to our simulation model, the price of guaranteeing the incentive compatibility of tournaments with multiple group stages is marginal: the use of a fair mechanism essentially does not affect the selective ability and the competitive balance of these tournaments.

Somewhat surprisingly, we have not found any controversy about the particular handball match presented in Section~\ref{Sec2}. Nonetheless, its detection is non-trivial as compared to the football and basketball matches discussed in Section~\ref{Sec1} because it was enough to make some mistakes in defence or attack, without the need to score own goals. Reasonably, the EHF remained silent on this issue, and the audience obviously did not study the tie-breaking rules carefully. On the other hand, the Polish coach and players probably knew that they should not make great efforts to win by a higher margin. Hopefully, our paper will contribute to placing this game in the category of the notorious ``\href{https://de.wikipedia.org/wiki/Nichtangriffspakt_von_Gij\%C3\%B3n}{Nichtangriffspakt (Schande) von Gij\'on}''\footnote{~\citet{KendallLenten2017} use the term ``Shame of Gij\'on'', while Wikipedia calls it ``\href{https://en.wikipedia.org/wiki/Disgrace_of_Gij\%C3\%B3n}{Disgrace of Gij\'on}''. The name is given to a 1982 FIFA World Cup football match played between West Germany and Austria at Gij\'on, Spain, on 25 June 1982. A win by one or two goals for West Germany would result in both them and Austria qualifying at the expense of Algeria. West Germany took the lead after 10 minutes, and the remaining 80 minutes were characterised by few serious attempts by either side to score. Both teams were accused of match-fixing although FIFA ruled that they did not break any rules.} \citep[Section~3.9.1]{KendallLenten2017} in the history of sports.
A match played by Australia and West Indies in the \href{https://en.wikipedia.org/wiki/1999_Cricket_World_Cup}{1999 Cricket World Cup} might be an example of similar tacit collusion or emerging cooperation, too \citep[Section~3.7.2]{KendallLenten2017}. However, in contrast to the scenario presented in Section~\ref{Sec2}, this plan---if there was one---did not work out entirely.

Several directions remain open for future research.
First, by the quantification of team strengths and the modelling of match outcomes, the probability of situations susceptible to manipulation can be estimated \citep{ChaterArrondelGayantLaslier2021, Csato2021i, Guyon2020a}.
Second, strategy-proofness can be considered as another aspect in the comparison of different league formats \citep{GoossensBelienSpieksma2012}.
Third, it is clear that there are various trade-offs between efficiency and fairness, and sports administrators implicitly seem to accept some minimal level of tanking \citep{Pauly2014}.
Thus the final aim may be an extensive axiomatic discussion and comparison of sports ranking rules, which has started recently.


\section*{Acknowledgements}
\addcontentsline{toc}{section}{Acknowledgements}
\noindent
This paper could not have been written without \emph{my father} (also called \emph{L\'aszl\'o Csat\'o}), who has coded the simulation in Python. \\
We are grateful to \emph{Liam Lenten} and \emph{Tam\'as Halm} for useful advice. \\
Fourteen anonymous reviewers provided valuable comments and suggestions on earlier drafts. \\
We are indebted to the \href{https://en.wikipedia.org/wiki/Wikipedia_community}{Wikipedia community} for collecting and structuring invaluable information on the sports tournaments discussed. \\
The research was supported by the MTA Premium Postdoctoral Research Program grant PPD2019-9/2019.

\bibliographystyle{apalike}
\bibliography{All_references}

\begin{thebibliography}{}

\bibitem[Cea et~al., 2020]{CeaDuranGuajardoSureSiebertZamorano2020}
Cea, S., Dur{\'a}n, G., Guajardo, M., Saur{\'e}, D., Siebert, J., and Zamorano,
  G. (2020).
\newblock An analytics approach to the {FIFA} ranking procedure and the {W}orld
  {C}up final draw.
\newblock {\em Annals of Operations Research}, 286(1-2):119--146.

\bibitem[Chater et~al., 2021]{ChaterArrondelGayantLaslier2021}
Chater, M., Arrondel, L., Gayant, J.-P., and Laslier, J.-F. (2021).
\newblock Fixing match-fixing: Optimal schedules to promote competitiveness.
\newblock {\em European Journal of Operational Research}, 294(2):673--683.

\bibitem[Csat\'o, 2018]{Csato2018b}
Csat\'o, L. (2018).
\newblock Was {Z}idane honest or well-informed? {H}ow {UEFA} barely avoided a
  serious scandal.
\newblock {\em Economics Bulletin}, 38(1):152--158.

\bibitem[Csat\'o, 2020a]{Csato2020c}
Csat\'o, L. (2020a).
\newblock The incentive (in)compatibility of group-based qualification systems.
\newblock {\em International Journal of General Systems}, 49(4):374--399.

\bibitem[Csat\'o, 2020b]{Csato2020d}
Csat\'o, L. (2020b).
\newblock When neither team wants to win: {A} flaw of recent {UEFA}
  qualification rules.
\newblock {\em International Journal of Sports Science \& Coaching},
  15(4):526--532.

\bibitem[Csat\'o, 2021a]{Csato2021i}
Csat\'o, L. (2021a).
\newblock Quantifying incentive (in)compatibility: {A} case study from sports.
\newblock Manuscript. arXiv:
  \href{https://arxiv.org/abs/2104.03067}{2104.03067}.

\bibitem[Csat\'o, 2021b]{Csato2021b}
Csat\'o, L. (2021b).
\newblock A simulation comparison of tournament designs for the {W}orld {M}en's
  {H}andball {C}hampionships.
\newblock {\em International Transactions in Operational Research},
  28(5):2377--2401.

\bibitem[Csat\'o, 2021c]{Csato2021a}
Csat\'o, L. (2021c).
\newblock {\em Tournament Design: How Operations Research Can Improve Sports
  Rules}.
\newblock Palgrave Pivots in Sports Economics. Palgrave Macmillan, Cham,
  Switzerland.

\bibitem[Dagaev and Sonin, 2018]{DagaevSonin2018}
Dagaev, D. and Sonin, K. (2018).
\newblock Winning by losing: {I}ncentive incompatibility in multiple
  qualifiers.
\newblock {\em Journal of Sports Economics}, 19(8):1122--1146.

\bibitem[EHF, 2014]{EHF2014b}
EHF (2014).
\newblock Report: 11th {M}en's {E}uropean {H}andball {C}hampionship ({EHF EURO}
  2014), {P}reliminary round - {G}roup {C}, {M}atch {N}o.~17, {P}oland against
  {R}ussia.
\newblock 17 January.
  \url{http://handball.sportresult.com/hbem14m/PDF/17012014/M17/M17.pdf}.

\bibitem[EHF, 2021]{EHF2021}
EHF (2021).
\newblock {\em EHF {EURO} {R}egulations}.
\newblock
  \url{http://cms.eurohandball.com/PortalData/1/Resources/1_ehf_main/11_downloadsregulations_forms/1_regulations/2_ech/EUROreg_100714.pdf}.

\bibitem[Fornwagner, 2019]{Fornwagner2019}
Fornwagner, H. (2019).
\newblock Incentives to lose revisited: The {NHL} and its tournament
  incentives.
\newblock {\em Journal of Economic Psychology}, 75(Part B):102088.

\bibitem[Goossens et~al., 2012]{GoossensBelienSpieksma2012}
Goossens, D.~R., Beli{\"e}n, J., and Spieksma, F.~C.~R. (2012).
\newblock Comparing league formats with respect to match importance in
  {B}elgian football.
\newblock {\em Annals of Operations Research}, 194(1):223--240.

\bibitem[Guyon, 2015]{Guyon2015a}
Guyon, J. (2015).
\newblock Rethinking the {FIFA} {W}orld {C}up\textsuperscript{{TM}} final draw.
\newblock {\em Journal of Quantitative Analysis in Sports}, 11(3):169--182.

\bibitem[Guyon, 2020]{Guyon2020a}
Guyon, J. (2020).
\newblock Risk of collusion: {W}ill groups of 3 ruin the {FIFA} {W}orld {C}up?
\newblock {\em Journal of Sports Analytics}, 6(4):259--279.

\bibitem[Kendall and Lenten, 2017]{KendallLenten2017}
Kendall, G. and Lenten, L.~J.~A. (2017).
\newblock When sports rules go awry.
\newblock {\em European Journal of Operational Research}, 257(2):377--394.

\bibitem[Laliena and L{\'o}pez, 2019]{LalienaLopez2019}
Laliena, P. and L{\'o}pez, F.~J. (2019).
\newblock Fair draws for group rounds in sport tournaments.
\newblock {\em International Transactions in Operational Research},
  26(2):439--457.

\bibitem[Lasek and Gagolewski, 2018]{LasekGagolewski2018}
Lasek, J. and Gagolewski, M. (2018).
\newblock The efficacy of league formats in ranking teams.
\newblock {\em Statistical Modelling}, 18(5-6):411--435.

\bibitem[Lenten and Kendall, 2021]{LentenKendall2021}
Lenten, L.~J.~A. and Kendall, G. (2021).
\newblock Scholarly sports: Influence of social science academe on sports rules
  and policy.
\newblock {\em Journal of the Operational Research Society}, in press.
\newblock {DOI}:
  \href{https://doi.org/10.1080/01605682.2021.2000896}{10.1080/01605682.2021.2000896}.

\bibitem[Pauly, 2014]{Pauly2014}
Pauly, M. (2014).
\newblock Can strategizing in round-robin subtournaments be avoided?
\newblock {\em Social Choice and Welfare}, 43(1):29--46.

\bibitem[Price et~al., 2010]{PriceSoebbingBerriHumphreys2010}
Price, J., Soebbing, B.~P., Berri, D., and Humphreys, B.~R. (2010).
\newblock Tournament incentives, league policy, and {NBA} team performance
  revisited.
\newblock {\em Journal of Sports Economics}, 11(2):117--135.

\bibitem[Scarf et~al., 2009]{ScarfYusofBilbao2009}
Scarf, P., Yusof, M.~M., and Bilbao, M. (2009).
\newblock A numerical study of designs for sporting contests.
\newblock {\em European Journal of Operational Research}, 198(1):190--198.

\bibitem[Scarf and Yusof, 2011]{ScarfYusof2011}
Scarf, P.~A. and Yusof, M.~M. (2011).
\newblock A numerical study of tournament structure and seeding policy for the
  soccer {W}orld {C}up {F}inals.
\newblock {\em Statistica Neerlandica}, 65(1):43--57.

\bibitem[Szymanski, 2003]{Szymanski2003}
Szymanski, S. (2003).
\newblock The economic design of sporting contests.
\newblock {\em Journal of Economic Literature}, 41(4):1137--1187.

\bibitem[Taylor and Trogdon, 2002]{TaylorTrogdon2002}
Taylor, B.~A. and Trogdon, J.~G. (2002).
\newblock Losing to win: Tournament incentives in the {N}ational {B}asketball
  {A}ssociation.
\newblock {\em Journal of Labor Economics}, 20(1):23--41.

\bibitem[Vong, 2017]{Vong2017}
Vong, A.~I.~K. (2017).
\newblock Strategic manipulation in tournament games.
\newblock {\em Games and Economic Behavior}, 102:562--567.

\bibitem[Wright, 2014]{Wright2014}
Wright, M. (2014).
\newblock {OR} analysis of sporting rules -- {A} survey.
\newblock {\em European Journal of Operational Research}, 232(1):1--8.

\end{thebibliography}

\clearpage

\section*{Appendix}
\addcontentsline{toc}{section}{Appendix}

\renewcommand\thefigure{A.\arabic{figure}}
\setcounter{figure}{0}

\makeatletter
\renewcommand\p@subfigure{A.\arabic{figure}}
\makeatother

\begin{figure}[ht!]
\centering

\begin{subfigure}{\textwidth}
  \centering
  \subcaption{Group stages: preliminary and main rounds}
  \label{Fig_A1a}

\begin{tikzpicture}[scale=1, auto=center, transform shape, >=triangle 45]
  \path
    (-6,0) coordinate (A) node {
    \begin{tabular}{c} \toprule
    \textbf{Group A} \\ \midrule
    A1 \\
    A2 \\
    A3 \\ \midrule 
    A4 \\ \bottomrule
    \end{tabular}
    }
    (6,0) coordinate (B) node { 
    \begin{tabular}{c} \toprule
    \textbf{Group B} \\ \midrule
    B1 \\
    B2 \\
    B3 \\ \midrule 
    B4 \\ \bottomrule
    \end{tabular}
    }
    (0,0) coordinate (X) node { 
    \begin{tabular}{c} \toprule
    \textbf{Group X} \\ \midrule
    X1 \\
    X2 \\ \midrule
    X3 \\  
    X4 \\
    X5 \\
    X6 \\ \bottomrule
    \end{tabular}
    };

  \draw[->] (-5,0.5)--(-1.25,1.75);
  \draw[->] (-5,0)--(-1.25,1.75);
  \draw[->] (-5,-0.5)--(-1.25,1.75);
  \draw[->] (5,0.5)--(1.25,1.75);
  \draw[->] (5,0)--(1.25,1.75);
  \draw[->] (5,-0.5)--(1.25,1.75);
\end{tikzpicture}

\begin{tikzpicture}[scale=1, auto=center, transform shape, >=triangle 45]
  \path
    (-6,0) coordinate (C) node {
    \begin{tabular}{c} \toprule
    \textbf{Group C} \\ \midrule
    C1 \\
    C2 \\
    C3 \\ \midrule 
    C4 \\ \bottomrule
    \end{tabular}
    }
    (6,0) coordinate (D) node { 
    \begin{tabular}{c} \toprule
    \textbf{Group D} \\ \midrule
    D1 \\
    D2 \\
    D3 \\ \midrule 
    D4 \\ \bottomrule
    \end{tabular}
    }
    (0,0) coordinate (Y) node { 
    \begin{tabular}{c} \toprule
    \textbf{Group Y} \\ \midrule
    Y1 \\
    Y2 \\ \midrule
    Y3 \\  
    Y4 \\
    Y5 \\
    Y6 \\ \bottomrule
    \end{tabular}
    };

  \draw[->] (-5,0.5)--(-1.25,1.75);
  \draw[->] (-5,0)--(-1.25,1.75);
  \draw[->] (-5,-0.5)--(-1.25,1.75);
  \draw[->] (5,0.5)--(1.25,1.75);
  \draw[->] (5,0)--(1.25,1.75);
  \draw[->] (5,-0.5)--(1.25,1.75);
\end{tikzpicture}
\end{subfigure}

\vspace{2cm}
\begin{subfigure}{\textwidth}
  \centering
  \subcaption{Knockout stage}
  \label{Fig_A1b}
  
\begin{tikzpicture}[
  level distance=5cm,every node/.style={minimum width=2cm,inner sep=0pt},
  edge from parent/.style={ultra thick,draw},
  level 1/.style={sibling distance=4cm},
  level 2/.style={sibling distance=2cm},
  legend/.style={inner sep=3pt}
]
\node (1) {\Pair{F}{$\mathcal{W}$/SF1}{$\mathcal{W}$/SF2}}
[edge from parent fork left,grow=left]
child {node (2) {\Pair{SF1}{X1}{Y2}}
}
child {node {\Pair{SF2}{X2}{Y1}}
};
\node[legend] at ([yshift=1cm]2) (SF) {\textbf{Semifinals}};
\node[legend] at (1|-SF) (SF) {\textbf{Final}};
\node[legend] at ([yshift=-1cm]1) {\textbf{Third place}};
\node[legend] at ([yshift=-2cm]1) {{\Pair{BM}{$\mathcal{L}$/SF1}{$\mathcal{L}$/SF2}}};
\end{tikzpicture}
\end{subfigure}

\caption{The tournament format of the \href{https://en.wikipedia.org/wiki/2014_European_Men\%27s_Handball_Championship}{2014 European Men's Handball Championship}}
\label{Fig_A1}

\end{figure}

\begin{figure}[ht!]
\centering

\begin{subfigure}{\textwidth}
  \centering
  \subcaption{Group stages: preliminary and main rounds}
  \label{Fig_A2a}

\begin{tikzpicture}[scale=1, auto=center, transform shape, >=triangle 45]
  \path
    (-6,0) coordinate (A) node {
    \begin{tabular}{c} \toprule
    \textbf{Group A} \\ \midrule
    A1 \\
    A2 \\
    A3 \\ \midrule 
    A4 \\
    A5 \\
    A6 \\ \bottomrule
    \end{tabular}
    }
    (6,0) coordinate (B) node { 
    \begin{tabular}{c} \toprule
    \textbf{Group B} \\ \midrule
    B1 \\
    B2 \\
    B3 \\ \midrule 
    B4 \\
    B5 \\
    B6 \\ \bottomrule
    \end{tabular}
    }
    (0,0) coordinate (X) node { 
    \begin{tabular}{c} \toprule
    \textbf{Group X} \\ \midrule
    X1 \\
    X2 \\ \midrule
    X3 \\  
    X4 \\
    X5 \\
    X6 \\ \bottomrule
    \end{tabular}
    };

  \draw[->] (-5,1)--(-1.25,1.75);
  \draw[->] (-5,0.5)--(-1.25,1.75);
  \draw[->] (-5,0)--(-1.25,1.75);
  \draw[->] (5,1)--(1.25,1.75);
  \draw[->] (5,0.5)--(1.25,1.75);
  \draw[->] (5,0)--(1.25,1.75);
\end{tikzpicture}

\begin{tikzpicture}[scale=1, auto=center, transform shape, >=triangle 45]
  \path
    (-6,0) coordinate (C) node {
    \begin{tabular}{c} \toprule
    \textbf{Group C} \\ \midrule
    C1 \\
    C2 \\
    C3 \\ \midrule 
    C4 \\
    C5 \\
    C6 \\ \bottomrule
    \end{tabular}
    }
    (6,0) coordinate (D) node { 
    \begin{tabular}{c} \toprule
    \textbf{Group D} \\ \midrule
    D1 \\
    D2 \\
    D3 \\ \midrule 
    D4 \\
    D5 \\
    D6 \\ \bottomrule
    \end{tabular}
    }
    (0,0) coordinate (Y) node { 
    \begin{tabular}{c} \toprule
    \textbf{Group Y} \\ \midrule
    Y1 \\
    Y2 \\ \midrule
    Y3 \\  
    Y4 \\
    Y5 \\
    Y6 \\ \bottomrule
    \end{tabular}
    };

  \draw[->] (-5,1)--(-1.25,1.75);
  \draw[->] (-5,0.5)--(-1.25,1.75);
  \draw[->] (-5,0)--(-1.25,1.75);
  \draw[->] (5,1)--(1.25,1.75);
  \draw[->] (5,0.5)--(1.25,1.75);
  \draw[->] (5,0)--(1.25,1.75);
\end{tikzpicture}
\end{subfigure}

\vspace{1cm}
\begin{subfigure}{\textwidth}
  \centering
  \subcaption{Knockout stage}
  \label{Fig_A2b}
  
\begin{tikzpicture}[
  level distance=5cm,every node/.style={minimum width=2cm,inner sep=0pt},
  edge from parent/.style={ultra thick,draw},
  level 1/.style={sibling distance=4cm},
  level 2/.style={sibling distance=2cm},
  legend/.style={inner sep=3pt}
]
\node (1) {\Pair{F}{$\mathcal{W}$/SF1}{$\mathcal{W}$/SF2}}
[edge from parent fork left,grow=left]
child {node (2) {\Pair{SF1}{X1}{Y2}}
}
child {node {\Pair{SF2}{X2}{Y1}}
};
\node[legend] at ([yshift=1cm]2) (SF) {\textbf{Semifinals}};
\node[legend] at (1|-SF) (SF) {\textbf{Final}};
\node[legend] at ([yshift=-1cm]1) {\textbf{Third place}};
\node[legend] at ([yshift=-2cm]1) {{\Pair{BM}{$\mathcal{L}$/SF1}{$\mathcal{L}$/SF2}}};
\end{tikzpicture}
\end{subfigure}

\caption{Tournament format $G66$ of the \href{https://en.wikipedia.org/wiki/2011_World_Men\%27s_Handball_Championship}{2011} and the \href{https://en.wikipedia.org/wiki/2019_World_Men\%27s_Handball_Championship}{2019 World Men's Handball Championship}s}
\label{Fig_A2}

\end{figure}

\begin{figure}[ht!]
\centering

\begin{subfigure}{\textwidth}
  \centering
  \subcaption{Group stages: preliminary and main rounds}
  \label{Fig_A3a}

\begin{tikzpicture}[scale=1, auto=center, transform shape, >=triangle 45]
  \path
    (-6,3.5) coordinate (A) node {
    \begin{tabular}{c} \toprule
    \textbf{Group A} \\ \midrule
    A1 \\
    A2 \\ \midrule
    A3 \\  
    A4 \\ \bottomrule
    \end{tabular}
    }
    (-6,0) coordinate (B) node { 
    \begin{tabular}{c} \toprule
    \textbf{Group B} \\ \midrule
    B1 \\
    B2 \\ \midrule
    B3 \\ 
    B4 \\ \bottomrule
    \end{tabular}
    }
    (-6,-3.5) coordinate (C) node {
    \begin{tabular}{c} \toprule
    \textbf{Group C} \\ \midrule
    C1 \\
    C2 \\ \midrule
    C3 \\ 
    C4 \\ \bottomrule
    \end{tabular}
    }
    (6,3.5) coordinate (D) node {
    \begin{tabular}{c} \toprule
    \textbf{Group D} \\ \midrule
    D1 \\
    D2 \\ \midrule
    D3 \\  
    D4 \\ \bottomrule
    \end{tabular}
    }
    (6,0) coordinate (E) node { 
    \begin{tabular}{c} \toprule
    \textbf{Group E} \\ \midrule
    E1 \\
    E2 \\ \midrule
    E3 \\ 
    E4 \\ \bottomrule
    \end{tabular}
    }
    (6,-3.5) coordinate (F) node {
    \begin{tabular}{c} \toprule
    \textbf{Group F} \\ \midrule
    F1 \\
    F2 \\ \midrule
    F3 \\ 
    F4 \\ \bottomrule
    \end{tabular}
    }
    (0,2.5) coordinate (X) node { 
    \begin{tabular}{c} \toprule
    \textbf{Group X} \\ \midrule
    X1 \\
    X2 \\
    X3 \\  
    X4 \\ \midrule
    X5 \\
    X6 \\ \bottomrule
    \end{tabular}
    }
    (0,-2.5) coordinate (Y) node { 
    \begin{tabular}{c} \toprule
    \textbf{Group Y} \\ \midrule
    Y1 \\
    Y2 \\
    Y3 \\  
    Y4 \\ \midrule
    Y5 \\
    Y6 \\ \bottomrule
    \end{tabular}
    };
  \draw[->] (-5,4)--(-1.25,4.25);
  \draw[->] (-5,3.5)--(-1.25,4.25);
  \draw[->] (-5,0.5)--(-1.25,4.25);
  \draw[->] (-5,0)--(-1.25,4.25);
  \draw[->] (-5,-3)--(-1.25,4.25);
  \draw[->] (-5,-3.5)--(-1.25,4.25);
  \draw[->] (5,4)--(1.25,-0.75);
  \draw[->] (5,3.5)--(1.25,-0.75);
  \draw[->] (5,0.5)--(1.25,-0.75);
  \draw[->] (5,0)--(1.25,-0.75);
  \draw[->] (5,-3)--(1.25,-0.75);
  \draw[->] (5,-3.5)--(1.25,-0.75);
\end{tikzpicture}
\end{subfigure}

\vspace{1cm}
\begin{subfigure}{\textwidth}
  \centering
  \subcaption{Knockout stage}
  \label{Fig_A3b}
  
  \begin{tikzpicture}[
  level distance=4cm,every node/.style={minimum width=2cm,inner sep=0pt},
  edge from parent/.style={ultra thick,draw},
  level 1/.style={sibling distance=4cm},
  level 2/.style={sibling distance=2cm},
  legend/.style={inner sep=3pt}
]
\node (1) {\Pair{F}{$\mathcal{W}$/SF1}{$\mathcal{W}$/SF2}}
[edge from parent fork left,grow=left]
child {node (2) {\Pair{SF1}{$\mathcal{W}$/QF1}{$\mathcal{W}$/QF2}}
child {node (3) {\Pair{QF1}{X1}{Y4}}
} 
child {node {\Pair{QF2}{X3}{Y2}}
} 
} 
child {node {\Pair{SF2}{$\mathcal{W}$/QF3}{$\mathcal{W}$/QF4}}
child {node {\Pair{QF3}{X2}{Y3}}
} 
child {node {\Pair{QF4}{X4}{Y1}}
} 
};
\node[legend] at ([yshift=1cm]3) (QF) {\textbf{Quarterfinals}};
\node[legend] at (2|-QF) {\textbf{Semifinals}};
\node[legend] at (1|-QF) {\textbf{Final}};
\node[legend] at ([yshift=-1cm]1) {\textbf{Third place}};
\node[legend] at ([yshift=-2cm]1) {{\Pair{BM}{$\mathcal{L}$/SF1}{$\mathcal{L}$/SF2}}};
\end{tikzpicture}
\end{subfigure}

\caption{Tournament format $G46$ of the \href{https://en.wikipedia.org/wiki/2007_World_Men\%27s_Handball_Championship}{2007 World Men's Handball Championship}}
\label{Fig_A3}

\end{figure}


\end{document}